\newcolumntype{R}[1]{>{\RaggedRight}p{#1}}
\newtheorem{theorem}{Theorem}
\newtheorem{lemma}{Lemma}
\newtheorem{proposition}{Proposition}
\newtheorem{conjecture}{Conjecture}
\def \be {\begin{equation}}
\def \ee {\end{equation}}
\newcommand{\BZI}{\operatorname{PBnI}}
\begin{document}

\def\title #1{\begin{center}
{\Large {\sc #1}}
\end{center}}
\def\author #1{\begin{center} {#1}
\end{center}}

%%\setstretch{1.1}

\title{\sc A note on limit results for the Penrose-Banzhaf index}

\author{Sascha Kurz\\ {\small Dept.\ of Mathematics, University of Bayreuth, Germany, sascha.kurz@uni-bayreuth.de}}

%%\vspace{0.01cm}
%%
%%
%%\begin{center} {\tt %This draft: 
%%\hspace{-1.5em} \today} \end{center}
%%
%%\vspace{0.3cm}

\begin{center} {\bf {\sc Abstract}} \end{center}
\vspace{-4mm}
{\small 
  It is well known that the Penrose-Banzhaf index of a weighted game can differ starkly from corresponding weights. Limit results 
  are quite the opposite, i.e., under certain conditions the power distribution approaches the weight distribution. Here we 
  provide parametric examples that give necessary conditions for the existence of limit results for the Penrose-Banzhaf index. 
  %%for different measurements of deviation.
}

\noindent
{\small
\textbf{Keywords:}
weighted voting $\cdot$ power measurement $\cdot$ Penrose-Banzhaf index $\cdot$ limit results\\ 
\textbf{JEL codes:} C61 $\cdot$ C71
}
 
\section{Introduction}\label{sec:Introduction}

Consider a private limited company with four shareholders. Assume that the shares are given by $(0.42,0.40,0.09,0.09)$ and that 
decisions are drawn by simple majority rule. The shares suggest that the influence on company decisions is similar for the 
first two and the last two shareowners. However, a proposal can be enforced either by shareholders $2,3,4$ or by shareholder 
$1$ with the support of at least one of the others. Thus, restricting  the analysis to shares and the decision rule, the three later 
shareowners have equal say, which is not reflected by the magnitude of shares at all. In order to evaluate influence in such decision environments, 
power indices like the Penrose-Banzhaf index \cite{banzhaf1964weighted,penrose1952objective}, the Shapley-Shubik index 
\cite{shapley1954method}, or the nucleolus \cite{schmeidler1969nucleolus} were introduced. In our example the 
corresponding power distributions are given by $(\tfrac{1}{2},\tfrac{1}{6},\tfrac{1}{6},\tfrac{1}{6})$, 
$(\tfrac{1}{2},\tfrac{1}{6},\tfrac{1}{6},\tfrac{1}{6})$, and $(\tfrac{2}{5},\tfrac{1}{5},\tfrac{1}{5},\tfrac{1}{5})$, respectively. So, 
our example is just an instance of the well known fact that relative weights can differ starkly from the corresponding power distribution. 
However, under certain conditions, weights and power are almost equal, which is studied under the term \textit{limit results} for 
power indices in the literature. An early example %%of such a result 
was mentioned by Penrose in 1952, see the appendix of \cite{penrose1952objective}. Roughly speaking, if a certain quantity, now 
known as the \textit{Laakso-Taagepera index} \cite{laakso1979effective} or \textit{Herfindahl-Hirschman index}, is large, then 
for simple majority the weights are a good approximation for the Penrose-Banzhaf index. For a specific interpretation of the term 
{\lq\lq}good approximation{\rq\rq}, a proof of some special cases and counter examples have been given in \cite{lindner2004ls} and 
\cite{lindner2007cases}, respectively. Here we study a wider range of measures for deviation and provide parametric examples that 
give necessary conditions for the existence of limit results for the Penrose-Banzhaf index.

We remark that approximations of power indices by relative weights with explicit error bounds, see e.g.\ Theorem~\ref{thm_nuc_approx}, 
are beneficial for several reasons. Since the computation of most known power indices is NP hard, they cannot be computed if the 
number of players gets large, while an approximation may suffice for some applications. Moreover, some weights may be unknown, like 
in a publicly traded stock company, or vary over time, which also prevents the direct computation of the corresponding power distribution.

The remaining part of the paper is structured as follows. After introducing the necessary preliminaries in Section~\ref{sec_preliminaries}, 
we present our main results in Section~\ref{sec_results} and set them into context. Auxiliary results and lengthy proofs are moved to an appendix.

\section{Preliminaries}      
\label{sec_preliminaries}

For a positive integer $n$ let $N=\{1,\dots, n\}$ be the set of players. A \emph{simple game} is a 
mapping $v\colon 2^N\to\{0,1\}$ with $v(\emptyset)=0$, $v(N)=1$, and $v(S)\le v(T)$ for all $S\subseteq T\subseteq N$. 
We call $S\subseteq N\backslash\{i\}$ an \emph{$i$-swing} if $v(S)=0$, $v(S\cup\{i\})=1$ and denote the 
number of $i$-swings in $v$ by $\eta_i(v)$. Setting $\eta(v)=\sum_{i\in N}\eta_i(v)$, the (normalized) Penrose-Banzhaf index of 
player~$i$ is given by $\BZI_i(v)=\eta_i(v)/\eta(v)$. The Shapley-Shubik index is given by the following 
weighted counting of swings
$$
  \operatorname{SSI}_i(v)=\sum_{S\subseteq N\backslash \{i\}} \frac{|S|!\cdot (n-|S|-1)!}{n!} \cdot\left(v(S\cup\{i\})-v(S)\right).
$$ 
The nucleolus $\operatorname{Nuc}(v)$ can be defined as the unique solution of an optimization problem, see \cite{schmeidler1969nucleolus} 
for the details.

A simple game $v$ is weighted if there exists a quota $q\in\mathbb{R}_{>0}$ and weights $w\in\mathbb{R}_{\ge 0}^n$ such that 
$v(S)=1$ iff $w(S)\ge q$, where $w(S):=\sum_{i\in S} w_i$. We write $v=[q;w]$ and speak of \emph{relative} or \textit{normalized weights} if 
$w(N)=1$. By $\Delta(w)=\max\{w_i\,:\,i\in N\}$ and $\Lambda(w)=\max\{w_i/w_j\,:\, i,j\in N, w_i,w_j\neq 0\}$ we denote 
the maximum weight and the span (of weight vector $w$), respectively.

For a vector $x\in\mathbb{R}^n$ we write $\Vert x\Vert_1=\sum_{i=1}^n \left|x_i\right|$, 
$\Vert x\Vert_\infty =\max\left\{\left|x_i\right|\,:\,1\le i\le n\right\}$, and 
$\Vert x\Vert_p=\left(\sum_{i=1}^{n} \left|x_i\right|^p\right)^{1/p}$ for $p\ge 1$. Each such norm $\Vert\cdot\Vert$ induces 
a distance function via $d(x,y)=\Vert x-y\Vert$. For all $x\in\mathbb{R}^n$ and all $1\le p\le p'$ we have 
$\Vert x\Vert_\infty\le \Vert x\Vert_{p'}\le \Vert x\Vert_{p}\le \Vert x\Vert_{1}$, i.e., $\Vert\cdot\Vert_1$ and 
$\Vert\cdot\Vert_\infty$ are the extreme cases on which we focus here. For normalized weight vectors, i.e., $w,w'\in\mathbb{R}^n_{\ge 0}$ 
with $\Vert w\Vert_1=\Vert w'\Vert_1=1$, the inequality $\Vert w-w'\Vert_{\infty}\le\Vert w-w'\Vert_1$ can be strengthened to 
$\Vert w-w'\Vert_{\infty}\le \Vert w-w'\Vert_1/2$, see Lemma~\ref{lemma_improved_relation_infty_1}.  

\section{Approximation results}
\label{sec_results}

Before we start to discuss approximation results between weights and the Penrose-Banzhaf index we briefly review 
the known results for the Shapley-Shubik index and the nucleolus. Neyman's main result of \cite{neyman1982renewal}  
implies as a special case:
\begin{theorem}(Cf.~\cite[Main Theorem]{neyman1982renewal})
  \label{thm_ssi}
  For each $\varepsilon>0$ there exist constants $\delta>0$ and $K>0$ such that for each $n\in\mathbb{N}$, $q\in(0,1)$ and 
  $w\in\mathbb{R}^n$ with $\Vert w\Vert_1=1$, $\Delta(w)<\delta$, and $K\Delta(w)<q<1-K\Delta(w)$, we have 
  $\Vert \operatorname{SSI}([q;w])-w\Vert_1<\varepsilon$. 
\end{theorem}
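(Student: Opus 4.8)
The plan is to obtain Theorem~\ref{thm_ssi} as a direct consequence of Neyman's quantitative renewal theorem for sampling without replacement \cite{neyman1982renewal}, after rewriting the Shapley-Shubik index probabilistically. Recall that the coefficient $|S|!\,(n-|S|-1)!/n!$ is exactly the probability that, for a uniformly random permutation $\pi$ of $N$, the set of players preceding $i$ equals $S$. Writing $T^\pi_k=\sum_{j\le k} w_{\pi(j)}$ for the partial sums along $\pi$, with $T^\pi_0=0$ and $T^\pi_n=1$, player $i=\pi(k)$ is pivotal in $[q;w]$ precisely when $T^\pi_{k-1}<q\le T^\pi_k$. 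Hence $\operatorname{SSI}_i([q;w])=\Pr[\,q\in I^\pi_i\,]$, where the half-open intervals $I^\pi_i=(T^\pi_{k-1},T^\pi_k]$ (with $i=\pi(k)$) form a random partition of $(0,1]$ whose block lengths are $w_1,\dots,w_n$ in random order. Players with $w_i=0$ satisfy $\operatorname{SSI}_i([q;w])=0=w_i$ and may be discarded, so we may assume $w\in\mathbb{R}^n_{>0}$.

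The sequence $T^\pi_0,T^\pi_1,\dots$ is exactly the renewal-type process obtained by sampling the steps $w_1,\dots,w_n$ without replacement, so Neyman's theorem applies: for a level $q$ whose distance to both endpoints $0$ and $1$ is at least a universal constant $K$ times the largest step $\Delta(w)$, the joint distribution of (the index of the block covering $q$, the location of $q$ within that block) is within total-variation distance $\rho(\Delta(w))$ of its stationary, size-biased form --- block $i$ chosen with probability $w_i$, and $q$ placed uniformly inside it --- where $\rho(\delta)\to 0$ as $\delta\to 0$, uniformly in $n$ and in $w$. Marginalizing over the within-block location, the block-index distribution is $(\operatorname{SSI}_i([q;w]))_{i\in N}$ on one side and $(w_i)_{i\in N}$ on the other, and since total variation does not increase under marginalization, $\tfrac12\Vert\operatorname{SSI}([q;w])-w\Vert_1\le\rho(\Delta(w))$. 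Given $\varepsilon>0$, it now suffices to take $K$ as the constant from Neyman's theorem and then choose $\delta\le 1/(2K)$ small enough that $\rho(\delta')<\varepsilon/2$ for all $\delta'<\delta$; any $n$, $q$, $w$ meeting the hypotheses then satisfy Neyman's assumptions and we obtain $\Vert\operatorname{SSI}([q;w])-w\Vert_1<\varepsilon$.

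The only substantial ingredient is Neyman's renewal theorem itself, which I would cite rather than reprove. Its difficulty lies exactly in the ``without replacement'' feature: the steps are dependent and are drawn from a population that changes as the walk advances, so the classical renewal estimates for i.i.d.\ increments do not apply directly and must be replaced by coupling/mixing bounds for the residual (overshoot) process, with the further requirement --- essential here --- that the convergence rate be uniform over all population sizes $n$, all weight profiles with $\Delta(w)$ small, and all admissible quotas. Everything on the game-theoretic side --- the permutation representation of $\operatorname{SSI}$, the identification of ``$i$ is pivotal'' with ``$I^\pi_i$ covers $q$'', the passage from the total-variation estimate to the stated $\ell_1$ bound, and the removal of null players --- is routine bookkeeping. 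One also notes that $\Delta(w)<\delta\le 1/(2K)$ is precisely what makes the admissible quota interval $(K\Delta(w),1-K\Delta(w))$ nonempty, so no degenerate small-$n$ cases require separate treatment.
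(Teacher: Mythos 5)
Your proposal is correct and follows essentially the same route as the paper, which offers no independent proof but simply notes that the statement is a special case of Neyman's renewal theorem for sampling without replacement \cite{neyman1982renewal}; your probabilistic rewriting of $\operatorname{SSI}$ via random orders, the identification of pivotality with the block covering $q$, and the passage from the total-variation estimate to the $\ell_1$ bound are exactly the standard bookkeeping that reduction requires.
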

In words, the Shapley-Shubik index of a weighted game is close to relative weights if the maximum weight is small 
and the quota is not too near to the boundary points $0$ or $1$. We remark that the maximum relative weight is small 
if and only if the Laakso-Taagepera index of $w$ is large, see Lemma~\ref{lemma_relation_maximum_laakso_taagepera} for the 
precise details. Invoking conditions on the maximum weight and the 
quota is indeed necessary for any power index $\varphi$.
\begin{proposition}(\cite[Proposition 1]{weight_polytope})
  \label{prop_impossible}
  Let $\varphi$ be a mapping from the set of weighted games (on $n$ players) into $\mathbb{R}_{\ge 0}^n$.
  \begin{itemize}
    \item[(i)] For each $q\in(0,1]$ and each integer $n\ge 2$ there exists a weighted game $[q;w]$, where $w\in\mathbb{R}^n_{\ge 0}$ and $\Vert w\Vert_1=1$, 
               such that $\Vert w-\varphi([q;w])\Vert_{1}\ge \frac{1}{3}$ and $\Vert w-\varphi([q;w])\Vert_{\infty}\ge \frac{1}{6}$.
    \item[(ii)] For each $\Delta\in(0,1)$ and each integer $n\ge \frac{4}{3\Delta}+6$ there exists a weighted game $[q;w]$, where $q\in(0,1]$, 
                $w\in\mathbb{R}^n_{\ge 0}$, $\Vert w\Vert_1=1$, and $\Delta(w)=\Delta$, 
               such that $\Vert w-\varphi([q;w])\Vert_{1}\ge \frac{1}{3}$, and $\Vert w-\varphi([q;w])\Vert_{\infty}\ge\Delta/4$.
  \end{itemize} 
\end{proposition}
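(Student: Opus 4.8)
The plan rests on the fact that a power index $\varphi$ is determined by the simple game $v$ alone, irrespective of the weighted representation, and that $\varphi$ is otherwise unconstrained (not assumed normalized or symmetric). Hence if $v$ admits two representations with the \emph{same} quota, $v=[q;w]=[q;w']$, then $\varphi([q;w])=\varphi([q;w'])=:x$; moreover every convex combination $w(t)=(1-t)w+tw'$ is again a $q$-representation of $v$ (the defining conditions $w(S)\ge q$ resp.\ $w(S)<q$ are affine, hence preserved, and $\Vert w(t)\Vert_1=1$ as well as $\Delta(w(t))\le\Delta$ persist), so $\varphi([q;w(t)])=x$ for all $t\in[0,1]$. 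Thus it suffices, for each prescribed choice of parameters, to exhibit a simple game $v$ together with a segment $\{w(t)\}_{t\in[0,1]}$ of $q$-representations along which no single point $x$ stays both within $\Vert\cdot\Vert_1$-distance $\tfrac13$ and within $\Vert\cdot\Vert_\infty$-distance $\sigma$ of every $w(t)$, where $\sigma=\tfrac16$ in part~(i) and $\sigma=\Delta/4$ in part~(ii).

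The one step that is not bookkeeping is a convexity argument guaranteeing that a sufficiently ``long'' segment does the job; this is also where the difficulty of the statement sits, since the obvious triangle-inequality argument only makes $\varphi(v)$ far from \emph{one} of two endpoints in $\Vert\cdot\Vert_1$ and from \emph{one} of them in $\Vert\cdot\Vert_\infty$, possibly different ones. For $x=\varphi(v)$ the maps $t\mapsto\Vert w(t)-x\Vert_1$ and $t\mapsto\Vert w(t)-x\Vert_\infty$ are convex, so $A:=\{t:\Vert w(t)-x\Vert_1<\tfrac13\}$ and $B:=\{t:\Vert w(t)-x\Vert_\infty<\sigma\}$ are subintervals of $[0,1]$. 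If the endpoints satisfy $\Vert w(0)-w(1)\Vert_1\ge\tfrac23$, differ in few enough coordinates (precisely: the number $m$ of changed coordinates satisfies $m\sigma\le\tfrac13$, which holds in all our constructions), and differ by at least $2\sigma$ in at least one coordinate, then $A\cup B\ne[0,1]$: assuming otherwise, one endpoint, say $t=0$, lies in $B\setminus A$; the bound $\Vert w(0)-x\Vert_\infty<\sigma$ pins $x$ down coordinatewise; substituting these bounds into $\Vert w(1)-x\Vert_1$ shows $\Vert w(1)-x\Vert_1>\tfrac13$, so $t=1\in B$; but then $\Vert w(1)-x\Vert_\infty<\sigma$ contradicts the coordinate bounds already forced at $t=0$. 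Any $t^\ast$ outside $A\cup B$ then yields a representation $w(t^\ast)$ with $\Vert w(t^\ast)-\varphi(v)\Vert_1\ge\tfrac13$ and $\Vert w(t^\ast)-\varphi(v)\Vert_\infty\ge\sigma$.

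For part~(i) I would case-split on $q$ and, for every $n\ge2$, present a concrete game with such a segment. If $q\le\tfrac13$: take $v$ won by exactly the coalitions meeting $\{1,2\}$ (all other players null); its $q$-representations contain the whole segment from $(1-q,q,0,\dots,0)$ to $(q,1-q,0,\dots,0)$, whose endpoints differ by $1-2q\ge\tfrac13$ in coordinates $1$ and $2$. If $\tfrac13<q\le\tfrac23$: take $v$ the dictator game of player $1$ and the segment from $(1,0,\dots,0)$ to the $q$-representation in which player $2$ absorbs all the slack; the endpoints differ by $\min(q,1-q)\ge\tfrac13$ in coordinates $1$ and $2$ (the subcases $q\le\tfrac12$ and $q>\tfrac12$ differ only in how the strict inequality $w_2<q$ is respected). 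If $\tfrac23<q\le1$: take $v$ with $\{1,2\}$ the unique minimal winning coalition; its $q$-representations range over an interval of length $2q-1\ge\tfrac13$ in coordinates $1$ and $2$. In each case the hypotheses of the lemma hold with $\sigma=\tfrac16$, so $\Vert w-\varphi(v)\Vert_1\ge\tfrac13$ and $\Vert w-\varphi(v)\Vert_\infty\ge\tfrac16$.

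For part~(ii) the prescribed $\Delta(w)=\Delta$ forces a block of $k=\lceil\tfrac{4}{3\Delta}\rceil$ ``heavy'' players, each free to carry any weight in $[q,\Delta]$ for a tiny quota $q$, together with a handful of genuine filler players of small positive weight; $v$ is the game ``a coalition wins iff it meets the block'', and $n\ge\tfrac{4}{3\Delta}+6$ is exactly what lets the $k$ heavy players and a constant number of fillers coexist while keeping $\Vert w\Vert_1=1$, $\Delta(w)=\Delta$ (one heavy player pinned at $\Delta$), $0<q\le\Delta$, and the total filler weight below $q$. In $w^A$ about $\tfrac34k\approx\tfrac1\Delta$ heavy players sit at $\Delta$ and the remaining $\approx\tfrac14k\approx\tfrac1{3\Delta}$ at $q$; in $w^B$ the roles of the first and the last $\approx\tfrac14k$ heavy players are swapped, so $\Vert w^A-w^B\Vert_1\approx\tfrac12k\Delta\ge\tfrac23$ — this identity is what pins the constant $\tfrac43$ — with each of the $\approx\tfrac12k$ affected coordinates changing by $\Delta-q\ge\Delta/2=2\sigma$. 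The segment from $w^A$ to $w^B$ stays inside the representation polytope, so the lemma applies with $\sigma=\Delta/4$, giving $\Vert w-\varphi(v)\Vert_1\ge\tfrac13$ and $\Vert w-\varphi(v)\Vert_\infty\ge\Delta/4$. The remaining tasks — choosing $q$ small enough, spreading the filler weights so that every block-free coalition strictly loses, and making all the totals come out to exactly $1$ and $\Delta$ — are elementary.
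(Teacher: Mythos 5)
Your overall route is the intended one (the paper itself gives no proof of this proposition, it quotes it from \cite{weight_polytope}; the remark following it points exactly to your mechanism, namely that all representations of one weighted game must receive the same value $x=\varphi(v)$, so a long enough segment in the representation polytope defeats $x$), and your part~(i), together with your endpoint lemma, is essentially sound: the case analysis ($0\in B$ forces $\Vert w(1)-x\Vert_1>2/3-m\sigma\ge 1/3$ and then an $\ell_\infty$ contradiction, etc.) is correct, the three quota regimes are covered, and the strict-inequality issues you flag are indeed only $\varepsilon$-adjustments. In fact your argument never needs convexity or interior points of the segment: one of the two endpoints always works, which is convenient because in part~(ii) the exhibited vector must itself satisfy $\Delta(w)=\Delta$.

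The genuine gap is in part~(ii), and it is not the bookkeeping you defer. First, the announced hypothesis of your lemma, $m\sigma\le\tfrac13$ with $\sigma=\Delta/4$, ``holds in all our constructions'' is false for $\Delta>\tfrac23$: any two distinct vectors with $\Vert\cdot\Vert_1=1$ differ in at least $m\ge 2$ coordinates, so $m\sigma\ge\Delta/2>\tfrac13$, and your lemma can never be invoked for $\Delta\in(\tfrac23,1)$, no matter which game you choose. The fix is to replace the condition ``$\Vert w(0)-w(1)\Vert_1\ge\tfrac23$ and $m\sigma\le\tfrac13$'' by the sharper requirement $\Vert w(0)-w(1)\Vert_1\ge\max\{\tfrac23,\,\tfrac13+m\sigma\}$ (the mixed case of your contradiction argument only needs $D\ge\tfrac13+m\sigma$); then a plain two-coordinate swap of the unique weight-$\Delta$ player with a weight-$q$ player, with the residual mass $1-\Delta-q$ parked on unchanged players, gives $D=2(\Delta-q)\ge\tfrac13+\tfrac{\Delta}{2}$ for large $\Delta$ and small $q$, which your written lemma cannot deliver. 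Second, the weight allocation you describe is infeasible as stated: with $k=\lceil\tfrac{4}{3\Delta}\rceil$ and about $\tfrac34 k$ players at weight exactly $\Delta$, those players alone carry $\tfrac34 k\Delta\ge 1$, so adding the $q$-players and positive fillers violates $\Vert w\Vert_1=1$; moreover for $\Delta>\tfrac12$ at most one coordinate can equal $\Delta$ at all (and at most $\lfloor 1/\Delta\rfloor$ in general), so the ``$\tfrac34 k$ at $\Delta$, $\tfrac14 k$ at $q$'' picture breaks structurally for all $\Delta>\tfrac13$ and must be replaced by a block containing intermediate weights, with only the swapped players pinned at $\Delta$ and $q$. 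Both defects are reparable, but as written the proof does not cover a whole range of $\Delta$, so part~(ii) is not yet proved.
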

The underlying reason is that different representations of the same weighted game have to be mapped onto the same power vector, i.e., 
the diameter of the polytope of representations of a weighted game plays the key role, as exploited in \cite{weight_polytope}.

While the functional dependence for $\delta$ and $K$ on $\varepsilon$ is hidden in the existence arguments of the proofs of 
\cite{neyman1982renewal}, a more explicit statement for the nucleolus was obtained in \cite{kurz2014nucleolus}:
\begin{theorem}(\cite[Lemma 1]{kurz2014nucleolus})
  \label{thm_nuc_approx}
  For $q\in(0,1)$, $w\in\mathbb{R}^n_{\ge 0}$ with $\Vert w\Vert_1=1$ we have $$\Vert\operatorname{Nuc}([q;w])-w\Vert_1\le 
  \frac{2\Delta(w)}{\min\{q,1-q\}}.$$
\end{theorem}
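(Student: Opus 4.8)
The plan is to compare $x:=\operatorname{Nuc}([q;w])$ with $w$ via the excesses $e(S,y):=v(S)-y(S)$ of the game $v:=[q;w]$, exploiting that $w$ itself is an unusually good imputation. I would first dispose of the degenerate case in which some player $i$ is individually winning, $w_i\ge q$: then $x$ is concentrated on the individually-winning players, so $\Vert x-w\Vert_1\le 2(1-q)$, and since $\Delta(w)\ge w_i\ge q$ this is at most $2\Delta(w)/\min\{q,1-q\}$. From now on assume $\Delta(w)<q$, so that $v(\{i\})=0$ for every $i$ and $w$ is an imputation of $v$; null players, which receive $0$ from the nucleolus, are split off by an elementary side argument, so we may also assume $x_i>0$ for every $i$.

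The robust half is a one-sided estimate: every winning coalition $S$ satisfies $x(S)\ge q$ (equivalently, $[q;w]\le[q;x]$ as simple games). For any imputation $y\ge 0$ with $y(N)=1$ the largest excess of $v$ at $y$ equals $1-\min\{y(T):v(T)=1\}$, since losing coalitions contribute non-positive excess and $N$ is winning. As the nucleolus weakly minimizes the largest excess over all imputations, comparing with $y=w$ gives $1-\min\{x(T):v(T)=1\}\le 1-\min\{w(T):v(T)=1\}\le 1-q$, hence $\min\{x(T):v(T)=1\}\ge q$. Together with the elementary bound $w(T)<q+\Delta(w)$ for every minimal winning coalition $T$ (delete its heaviest member, obtaining a losing coalition), this already locates $x$ from below along the minimal-winning structure of $v$.

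The delicate half is the matching upper control — the nucleolus must not concentrate appreciably more weight than $w$ on a minimal winning coalition, since it is this that drives $x$ back towards $w$; the one-sided estimate alone does not suffice, as it only says that $x$ weight-represents some simple game containing $v$, which may still leave $x$ far from $w$. I would obtain it from the finer optimality of the nucleolus through the balancedness (Kohlberg) criterion: at the top excess level the coalitions of maximal excess are precisely the cheapest-in-$x$ winning coalitions, all of them minimal winning, and they form a balanced family $\mathcal F_1$; with balancing coefficients $\lambda_S>0$, $\sum_{S\ni i}\lambda_S=1$, the identities $\sum_S\lambda_S x(S)=1=\sum_S\lambda_S w(S)$ combined with $x(S)=\beta(x):=\min\{x(T):v(T)=1\}$ on $\mathcal F_1$ and $q\le w(S)<q+\Delta(w)$ force $q\le\beta(x)<q+\Delta(w)$, so $x$ and $w$ agree to within $\Delta(w)$ on $\mathcal F_1$. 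Passing to the successive excess levels of the nucleolus yields an expanding chain of balanced families on which $x(\cdot)$ and $w(\cdot)$ differ by $O(\Delta(w))$; inverting these linear relations — equivalently, reading the estimate off the optimal dual variables of the sequence of linear programs defining the nucleolus, which is the route of \cite{kurz2014nucleolus} — gives the $\ell_1$-bound, the case $q\ge\tfrac12$ being handled symmetrically from maximal losing coalitions so that the denominator becomes $\min\{q,1-q\}$ and the factor $2$ appears. The genuine obstacle is this controlled propagation of the balancedness estimate through the excess levels; everything else — including the null-player bookkeeping and pinning down the exact constant — is routine.
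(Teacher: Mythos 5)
First, a point of reference: the paper itself gives no proof of this theorem --- it is quoted verbatim from \cite{kurz2014nucleolus} --- so there is no in-paper argument to compare against; your proposal has to stand on its own. On its own terms it is incomplete. What you actually establish is the easy half: the comparison of maximal excesses at $\operatorname{Nuc}([q;w])=x$ and at $w$ gives $x(S)\ge q$ for every winning $S$ (correct, and standard), plus the observation $w(T)<q+\Delta(w)$ for minimal winning $T$, and the top-level Kohlberg computation $q\le\min\{x(T):v(T)=1\}<q+\Delta(w)$ (fine, modulo the players with $x_i=0$, whose singleton constraints enter Kohlberg's criterion and whom you dismiss without argument --- they need not be null players a priori). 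The theorem's content, however, lies entirely in the step you defer: the claim that ``passing to the successive excess levels yields an expanding chain of balanced families on which $x(\cdot)$ and $w(\cdot)$ differ by $O(\Delta(w))$'' is not justified by anything you prove, and it is exactly where the difficulty sits. The comparison with $w$ is available only at the first level: the nucleolus beats $w$ in the first linear program, but from the second level on the optimization is restricted to the optimal face of the previous program, where $w$ in general no longer lies, so no inequality of the form $|x(S)-w(S)|=O(\Delta(w))$ for second- and lower-level coalitions (which may be losing coalitions, with excesses governed by the game's structure rather than by $q$) follows from lexicographic optimality. Moreover, even granting such level-wise estimates, ``inverting these linear relations'' must control the accumulation of errors over possibly very many levels and produce a constant $2/\min\{q,1-q\}$ independent of $n$; your sketch says nothing about how this happens, and you acknowledge as much. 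So the proposal is a plan with the central quantitative step missing, not a proof.

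Two smaller points. In the degenerate case $\Delta(w)\ge q$ your justification (``$x$ is concentrated on the individually-winning players'') is not true in general; the case is nevertheless harmless, since for $q\le\tfrac12$ the claimed bound exceeds the universal bound $\Vert x-w\Vert_1\le 2$, and for $q>\tfrac12$ a player of weight $\ge q$ is a dictator, so you should argue it that way. Also, the factor $2$ and the appearance of $1-q$ do not come from a ``symmetric treatment of maximal losing coalitions'' for $q\ge\tfrac12$ as you suggest; a cleaner source is that $x(S)\ge q$ together with $w(S)<q+\Delta(w)$ yields, by passing to complements, $x(N\backslash S)-w(N\backslash S)<\Delta(w)$, i.e.\ control of overshoot on complements of cheap winning coalitions --- but turning such coalitionwise bounds into the stated $\Vert\cdot\Vert_1$ bound is again the part that needs a genuine argument (this is what the cited reference supplies and what your sketch does not).
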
 

For the Penrose-Banzhaf index an analog of Theorem~\ref{thm_ssi} is impossible.
\begin{proposition}
  \label{prop_example_1}
  Let 
  $$
    v_n=[n^3+n^2;2n^2,\overset{2n^3}{\overbrace{1,\dots,1}}].\footnote{The weights of our example look rather artificial and special. 
    However, they are just a specific instance of a much larger class of weighted games. Let $f(n)$, $g(n)$ be two $\mathbb{N}\to\mathbb{N}$  
    functions that tend to infinity as $n$ increases and $c>1$ be an arbitrary constant. If $f(n)\ge (g(n)/f(n))^c$ for all suitably 
    large $n$, then for the sequence of weighted games $v_n=[g(n)+f(n);2f(n),1,\dots,1]$, with $2g(n)$ players of weight $1$, the 
    $\Vert\cdot\Vert_1$-distance between $\BZI(v_n)$ and the corresponding relative weight vector tends to $2$ as $n$ increases. 
    Similarly, the $\Vert\cdot\Vert_\infty$-distance tends to $1$. The proof from the appendix can be easily adopted to that end.}
  $$ 
  %%with $2n^3$ times weight $1$ and $w=\left(2n^2,1,\dots,1\right)/(2n^3+2n^2)$.
  Relative weights for a relative quota of $\tfrac{1}{2}$ are given by $w=\left(2n^2,1,\dots,1\right)/(2n^3+2n^2)$, 
  i.e., $\Vert w\Vert_1=1$ and $v_n=[\tfrac{1}{2};w]$. For $n\ge 11$ we have 
  $$
    \Vert \BZI(v_n)-w\Vert_1 \ge 2-\frac{4}{n}
    \quad\text{and}\quad
    \Vert \BZI(v_n)-w\Vert_\infty \ge 1-\frac{2}{n}.
  $$ 
\end{proposition}
\begin{proof}
  See appendix.
\end{proof}
For any given constants $\delta$ and $K$ we can choose $n$ large enough such that $\Delta(v_n)<\delta$ and 
$K\Delta(v_n)<q<1-K\Delta(v_n)$ since $\Delta(v_n)<\tfrac{2}{n}$ and $q=\tfrac{1}{2}$. Thus, 
$\Vert \BZI(v_n)-w\Vert_1 > \varepsilon$ for $\varepsilon<1$ and $n\ge 11$ sufficiently large. 

Note that for any $x,x'\in\mathbb{R}_{\ge 0}^n$ with $\Vert x\Vert_1=\Vert x'\Vert_1=1$ we have $\Vert x-x'\Vert_\infty\le 1$ and 
$\Vert x-x'\Vert_1\le 2$, i.e., for large $n$ the disparity between $\BZI(v_n)$ and the stated relative weights 
is as large as it could be for arbitrary vectors.   

Proposition~\ref{prop_example_1} has an interpretation for the owner structure of a stock company. It models the situation of a main 
holder and a bunch of equivalent small holders. In this form, this example also occurs in many text books. An alternative proof uses the 
normal approximation of the binomial distribution.

For the Penrose-Banzhaf index we have the following limit theorem, see \cite{lindner2004ls}.
\begin{theorem}
\label{thm_bzi} (\cite[Theorem 3.6]{lindner2004ls})
  Let $\widetilde{W}$ be a finite set of non-negative integers, $W\subseteq\widetilde{W}$ be a finite set of positive integers with greatest 
  common divisor $1$, $\rho\in\mathbb{R}_{>0}$, and $\left(w_i\right)_{i\in \mathbb{N}}$ be a sequence with $w_i\in \widetilde{W}$ for all 
  $i\in\mathbb{N}$ such that $\{i\in \mathbb{N}\,:\, w_i\in\widetilde{W}\backslash W\}$ is finite and $\sum_{i\in\{1\le j\le n\,:\,w_j=a\}} w_i\ge \rho\cdot \sum_{1\le i\le n} w_i$ for all $a\in W$ and all 
  sufficiently large $n$. Then, 
  \begin{equation}
    \label{eq_bzi_limit}
    \lim_{n\to\infty} \frac{\BZI_i\left(\left[\tfrac{1}{2},\overline{w}^{(n)}\right]\right)}
    {\BZI_j\left(\left[\tfrac{1}{2},\overline{w}^{(n)}\right]\right)}=\frac{w_i}{w_j}
  \end{equation}
  for all integers $i,j$ with $w_i,w_j\in W$, where $\overline{w}^{(n)}_h=w_h/\sum_{l=1}^n w_l$ denotes the relative weight.
\end{theorem}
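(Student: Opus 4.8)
The plan is to turn \eqref{eq_bzi_limit} into a statement about swing counts and then feed it to a local central limit theorem. Because $\operatorname{BZI}_h(v)=\eta_h(v)/\eta(v)$ has the same denominator for every player, the left-hand side of \eqref{eq_bzi_limit} equals $\eta_i^{(n)}/\eta_j^{(n)}$, where $\eta_h^{(n)}$ is the number of $h$-swings in the integer representation $[q_n;w_1,\dots,w_n]$ of $\left[\tfrac{1}{2},\overline{w}^{(n)}\right]$, with $W_n:=\sum_{l=1}^n w_l$ and $q_n:=\lceil W_n/2\rceil$. For a player $h$ with integer weight $w_h$, a coalition $S\subseteq\{1,\dots,n\}\setminus\{h\}$ is an $h$-swing precisely when $w(S)\in\{q_n-w_h,\dots,q_n-1\}$, so
$$
  \eta_h^{(n)}=\sum_{s=q_n-w_h}^{q_n-1} N^{(n)}_{-h}(s),\qquad
  N^{(n)}_{-h}(s):=\#\bigl\{S\subseteq\{1,\dots,n\}\setminus\{h\}:w(S)=s\bigr\}.
$$
It therefore suffices to show that each of these $w_h$-term sums equals $w_h\cdot(1+o(1))\cdot\kappa_n$, where $\kappa_n$ depends on $h$ only through a factor tending to $1$.

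I would then pass to probability. Realising $w(S)$ for uniform $S\subseteq\{1,\dots,n\}\setminus\{h\}$ as $\Sigma=\sum_{l\le n,\,l\ne h}w_l\xi_l$ with independent $\xi_l\sim\mathrm{Bernoulli}(\tfrac12)$ gives $N^{(n)}_{-h}(s)=2^{n-1}\Pr[\Sigma=s]$, with mean $\mu=\tfrac12\sum_{l\ne h}w_l$ and variance $\sigma^2=\tfrac14\sum_{l\ne h}w_l^2$. The hypotheses supply exactly what a local limit theorem requires: all weights lie in the finite set $\widetilde W$, hence are bounded, and $W_n\ge n-O(1)\to\infty$; all but finitely many are positive integers, so $\sigma^2\to\infty$ and the Lindeberg condition holds; and since each $a\in W$ carries total weight at least $\rho W_n\to\infty$ among the first $n$ players, every type in $W$ occurs infinitely often, so for all large $n$ the available steps $\{w_l:l\le n,\,l\ne h\}$ have greatest common divisor $\gcd(W)=1$, i.e.\ the lattice span is $1$ (the finitely many players whose weight lies in $\widetilde W\setminus W$, of weight possibly $0$, are immaterial here). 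Gnedenko's local central limit theorem for bounded independent integer summands then yields, uniformly in $s\in\mathbb Z$,
$$
  N^{(n)}_{-h}(s)=\frac{2^{n-1}}{\sigma\sqrt{2\pi}}\bigl(e^{-(s-\mu)^2/(2\sigma^2)}+o(1)\bigr)\qquad(n\to\infty).
$$

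To conclude I would observe that the summation window lies right next to the centre of this bell shape: from $q_n=\lceil W_n/2\rceil$ and $\mu=\tfrac12(W_n-w_h)$ one gets $|q_n-\mu|\le\tfrac12(w_h+1)$, so every $s$ appearing in $\eta_h^{(n)}$ satisfies $|s-\mu|\le C$ with $C=C(\widetilde W)$ a constant; hence $e^{-(s-\mu)^2/(2\sigma^2)}=1+O(\sigma^{-2})=1+o(1)$ uniformly over the window, and therefore
$$
  \eta_h^{(n)}=w_h\cdot\frac{2^{n-1}}{\sigma_{-h}\sqrt{2\pi}}\,(1+o(1)),\qquad \sigma_{-h}^2=\tfrac14\sum_{l\le n,\,l\ne h} w_l^2 .
$$
Since $\sigma_{-i}^2/\sigma_{-j}^2=1+(w_j^2-w_i^2)/\sum_{l\ne j}w_l^2\to1$ (bounded numerator, divergent denominator) and $\eta_j^{(n)}>0$ for large $n$, dividing the expressions for $i$ and $j$ gives $\eta_i^{(n)}/\eta_j^{(n)}\to w_i/w_j$, which is \eqref{eq_bzi_limit}.

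The genuinely delicate point is the local limit theorem with lattice span exactly $1$: one must check that deleting the single player $h$ and discarding the finitely many players with weight in $\widetilde W\setminus W$ destroys neither the divergence $\sigma^2\to\infty$ nor the span-$1$ property — both survive because the hypothesis forces each type in $W$ to grow linearly in $n$ while these modifications change each type's count by $O(1)$ — and that the relevant coalition weights sit within $O(1)$ of the mean, which is exactly why the quota is pinned to $\tfrac12$. The remaining manipulations are routine.
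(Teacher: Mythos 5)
The paper does not prove Theorem~\ref{thm_bzi} at all: it is quoted as a known result from Lindner and Machover \cite{lindner2004ls}, so there is no in-paper argument to compare against. Your sketch is essentially the standard route to this statement (and close in spirit to the cited source): identify the $h$-swings with coalitions whose weight falls in the length-$w_h$ window $\{q_n-w_h,\dots,q_n-1\}$, note that for quota $\tfrac12$ this window sits within $O(1)$ of the mean of $\Sigma=\sum_{l\ne h}w_l\xi_l$, and apply a local limit theorem to conclude that each lattice point in the window carries asymptotically the same count, whence $\eta_i^{(n)}/\eta_j^{(n)}\to w_i/w_j$; the reduction of $\operatorname{BZI}_i/\operatorname{BZI}_j$ to $\eta_i/\eta_j$, the computation $|q_n-\mu|\le\tfrac12(w_h+1)$, and the comparison $\sigma_{-i}/\sigma_{-j}\to 1$ are all correct. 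The one place where your write-up is looser than it should be is the black-box appeal to ``Gnedenko's local central limit theorem'': Gnedenko's theorem is for i.i.d.\ lattice summands, whereas here the $w_l\xi_l$ are independent but not identically distributed, and for non-identical summands span~$1$ of the combined support alone does not suffice for a uniform LLT. What rescues the argument is exactly the structural feature you point out, namely that the $\rho$-hypothesis forces each weight type $a\in W$ to occur linearly often among the first $n$ players (and removing player $h$ and the finitely many $\widetilde W\setminus W$ players changes each count by $O(1)$): one can then either group the sum as $\sum_{a\in W} a\,B_a$ plus a bounded remainder, with $B_a$ independent binomials, and run the Fourier-inversion proof directly (for $t\in(0,2\pi)$ the gcd-$1$ condition gives some $a\in W$ with $|\cos(ta/2)|$ bounded away from $1$, so $|\varphi_\Sigma(t)|$ decays geometrically in $n$, uniformly on compacta away from $0$ and $2\pi$), or invoke a local limit theorem for independent non-identically distributed bounded lattice variables (e.g.\ in the spirit of \cite{petrov1975sums}) whose hypotheses are verified by this same observation. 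With that substitution the proof is complete; everything else in your sketch is sound.
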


Since Equation~(\ref{eq_bzi_limit}) is a statement about the ratio between the Penrose-Banzhaf indices of two players whose weights 
are attained infinitely often, it is trivially satisfied in the example of Proposition~\ref{prop_example_1}. 
So, we give another parametric example. 

\begin{proposition}
  \label{prop_example_2}
  Let 
  $$
    v_n=[3n^3+n^2;\overset{2n+1}{\overbrace{2n^2,\dots,2n^2}},\overset{2n^3}{\overbrace{1,\dots,1}}].\footnote{Again, our example 
    is just a specific instance of a much larger class of weighted games. Let $f(n)$ be an $\mathbb{N}\to\mathbb{N}$  
    function that tends to infinity as $n$ increases and $c>1$ be an arbitrary constant. If $f(n)\ge n^c$ for all suitably 
    large $n$, then for the sequence of weighted games $v_n=[(3n+1)f(n);2f(n),\dots 2f(n),1,\dots,1]$, with $2n+1$ players of weight 
    $2f(n)$ and $2nf(n)$ players of weight $1$, the $\Vert\cdot\Vert_1$-distance between $\BZI(v_n)$ is lower bounded 
    by some positive real number for all suitably large $n$. The proof from the appendix can be easily adopted to that end.}
  $$ 
  %%with $2n+1$ times weight $2n^2$ and $2n^3$ times weight $1$. 
  Relative weights for a relative quota of $\tfrac{1}{2}$ are given by $w=\left(2n^2,\dots,2n^2,1,\dots,1\right)/(6n^3+2n^2)$, 
  i.e., $\Vert w\Vert_1=1$ and $v_n=[\tfrac{1}{2};w]$. For $n\ge 11$ we have $\BZI_1(v_n)/\BZI_{2n+2}(v_n) 
  \ge 2.6^n/(2n+1)$ and $\Vert \BZI(v_n)-w\Vert_1 \ge \tfrac{1}{5}$.
\end{proposition}
\begin{proof}
  See appendix.
\end{proof}

Here we only have two types of players which we call \textit{large} and \textit{small}. The weight fraction of the small 
players tends to $\tfrac{1}{3}$ as $n$ increases and the maximum relative weight tends to zero. Nevertheless the ratio between 
the Penrose-Banzhaf powers of large and small players grows exponentially faster than the ratio between their weights. Of course this 
does not contradict Theorem~\ref{thm_bzi} since $W$ is assumed to be finite. It was also noted in 
\cite[Section 10]{dubey1979mathematical} that the Penrose-Banzhaf index of $[q;w]$ can behave strangely if the span $\Lambda(w)$ grows 
without bound. To that end we state:
\begin{conjecture}
  \label{conjecture_bzi}
  There exists a constant $C>0$ such that for each $w\in\mathbb{R}_{>0}^n$ with $\Vert w\Vert_1=1$ we have
  $$
    \Vert \BZI([\tfrac{1}{2};w])-w\Vert_1\le C\cdot \Delta(w)\cdot \Lambda(w).
  $$
\end{conjecture}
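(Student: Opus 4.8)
\medskip
\noindent\textbf{Sketch of a possible approach.}
The plan is to pass to a probabilistic reformulation and then apply a quantitative local central limit theorem. Let $B_1,\dots,B_n$ be independent fair coin flips and put $Y_i=\sum_{j\neq i}w_jB_j$; then $S\subseteq N\setminus\{i\}$ is an $i$-swing exactly when $Y_i\in\bigl[\tfrac12-w_i,\tfrac12\bigr)$ on the event $\{B_j=1\iff j\in S\}$, so $\eta_i(v)=2^{n-1}p_i$ and $\operatorname{BZI}_i(v)=p_i\big/\sum_k p_k$ with $p_i:=\Pr\bigl[Y_i\in[\tfrac12-w_i,\tfrac12)\bigr]$. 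Since $\Vert w\Vert_1=1$, the mean $\mathbb{E}[Y_i]=\tfrac12(1-w_i)$ is precisely the midpoint of $[\tfrac12-w_i,\tfrac12)$, i.e.\ $p_i=\Pr\bigl[Y_i-\mathbb{E}Y_i\in[-w_i/2,w_i/2)\bigr]$. First I would record the purely formal reduction: the claimed bound is vacuous unless $\Delta(w)\Lambda(w)$ is small, so we may assume it is; and if one exhibits a single constant $g>0$ with $\sum_i|p_i-g\,w_i|\le C_0\,\Delta(w)\Lambda(w)\cdot g$, then writing $\operatorname{BZI}_i=\tfrac{g w_i+(p_i-gw_i)}{g+\sum_k(p_k-gw_k)}$ and using $\sum_iw_i=1$ yields $\Vert\operatorname{BZI}(v)-w\Vert_1\le 4C_0\,\Delta(w)\Lambda(w)$. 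So the task becomes: find such a $g$.

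The natural candidate is the local-CLT value $g=1/(\sqrt{2\pi}\,\sigma)$ with $\sigma^2=\tfrac14\Vert w\Vert_2^2$, and two kinds of error must be controlled. The first is harmless: $\operatorname{Var}(Y_i)=\tfrac14(\Vert w\Vert_2^2-w_i^2)$, and the elementary inequality $\Vert w\Vert_2^2\ge(\min_jw_j)\Vert w\Vert_1=\Delta(w)/\Lambda(w)$ gives $w_i^2/\Vert w\Vert_2^2\le\Delta(w)\Lambda(w)$; hence every $\sigma_i$ agrees with $\sigma$ up to relative error $O(\Delta(w)\Lambda(w))$, and the next Edgeworth corrections contribute only $O(\Delta(w))$ after summing over $i$ (via $\Vert w\Vert_3^3\le\Delta(w)\Vert w\Vert_2^2$), all within budget. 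The second is the genuine local-CLT remainder. A routine Fourier computation yields
\[
  p_i=\frac{2}{\pi}\int_0^{\infty}\frac{\sin(tw_i/2)}{t}\,\prod_{j\neq i}\cos\!\bigl(tw_j/2\bigr)\,dt ,
\]
and on $t\lesssim 1/\Vert w\Vert_2$ the expansions $\sin x/x=1+O(x^2)$ and $\log\cos x=-\tfrac12 x^2+O(x^4)$ turn the integrand into $\tfrac{w_i}{2}e^{-t^2\sigma^2/2}\bigl(1+O(\Delta(w)\Lambda(w))\bigr)$, reproducing the main term $g\,w_i$. What is left is to show that the tail $\sum_i\bigl|\int_{t\gtrsim 1/\Vert w\Vert_2}\tfrac{\sin(tw_i/2)}{t}\prod_{j\neq i}\cos(tw_j/2)\,dt\bigr|$ is negligible relative to $g\sim 1/\Vert w\Vert_2$, uniformly over all weight vectors of bounded span.

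This tail estimate is the step I expect to be the main obstacle, and is presumably why the statement is only conjectured. The factor $\prod_j\cos(tw_j/2)$ need not decay: for equal weights it is, up to sign, $|\cos(t\bar w/2)|^n$, which ``revives'' to height near $1$ around $t\approx 2\pi/\bar w$ and contributes there as much mass as the central bump; the conjectured rate survives only because $\sin(tw_i/2)/t$ vanishes at the revival and the integrand is odd across it, so the revival cancels. For a general weight vector all the near-resonances of $\prod_j\cos(tw_j/2)$ must be treated at once: when the weights carry no additive structure the product genuinely decays (a Littlewood--Offord / inverse-Littlewood--Offord phenomenon), while near-commensurate configurations need the cancellation mechanism, and fusing the two into a single uniform bound that yields the clean rate $O(\Delta(w)\Lambda(w))$ — rather than an extra logarithm or a worse power of $\Lambda(w)$ — is where the real work lies. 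Plausible intermediate targets are the regime $\Lambda(w)\le 1+c$ (perturbing off the exact identity $\operatorname{BZI}=w$ for equal weights), the weaker bound $C\,\Delta(w)\Lambda(w)^{\alpha}$ for some $\alpha>1$, or a version of Theorem~\ref{thm_bzi} made uniform in the set of weight values — precisely the uniformity that Proposition~\ref{prop_example_2} shows must fail once the span is unbounded.
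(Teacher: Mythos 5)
The statement you are addressing is an open conjecture in the paper: the author gives no proof of it, only surrounding evidence (Proposition~\ref{prop_example_2} showing that unbounded span $\Lambda(w)$ destroys any bound without a $\Lambda$-factor, and Proposition~\ref{prop_example_3} showing the quota $\tfrac{1}{2}$ is essential). So there is no proof in the paper to compare yours against, and the relevant question is whether your proposal actually closes the conjecture. It does not, and you say so yourself: the argument is an outline whose decisive step is missing. The formal reduction (from $\sum_i\lvert p_i-g\,w_i\rvert\le C_0\,\Delta(w)\Lambda(w)\,g$ to the claimed $\Vert\cdot\Vert_1$ bound), the probabilistic reformulation $\eta_i=2^{n-1}p_i$ with $p_i=\Pr[Y_i-\mathbb{E}Y_i\in[-w_i/2,w_i/2)]$, and the variance comparison via $\Vert w\Vert_2^2\ge\Delta(w)/\Lambda(w)$ are all sound and are the easy part. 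What is not supplied is the uniform control of the Fourier integral away from the origin, i.e.\ showing that $\sum_i\bigl|\int_{t\gtrsim 1/\Vert w\Vert_2}\tfrac{\sin(tw_i/2)}{t}\prod_{j\ne i}\cos(tw_j/2)\,dt\bigr|=O\bigl(\Delta(w)\Lambda(w)/\Vert w\Vert_2\bigr)$ uniformly over all weight vectors of bounded span. This is exactly where the arithmetic structure of the weights enters (revivals of the cosine product at near-commensurate weights versus Littlewood--Offord-type decay for unstructured weights), and no mechanism is given that fuses the two regimes into the clean rate $C\,\Delta(w)\Lambda(w)$; the paper's own remark that error bounds in general local limit theorems for lattice distributions (e.g.\ Petrov) are too weak for the related Proposition~\ref{prop_example_3} indicates that an off-the-shelf local CLT will not do the job.

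So the verdict is: your submission is a reasonable research plan, consistent with the constraints the paper's counterexamples impose (the bound must degenerate as $\Lambda(w)\to\infty$ and must be tied to $q=\tfrac12$), but it is not a proof, and accepting it would amount to accepting the conjecture on the strength of a heuristic. If you want to extract something provable from your outline, your own intermediate targets are the right ones: the perturbative regime $\Lambda(w)\le 1+c$, a weaker exponent $C\,\Delta(w)\Lambda(w)^{\alpha}$, or a quantitative, uniform version of Theorem~\ref{thm_bzi} for weights taking finitely many values, where the cosine-product revivals can be located explicitly and the odd-symmetry cancellation you describe can be made rigorous.
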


In Table~\ref{table_conjecture_1} we report some numerical experiments in order to justify Conjecture~\ref{conjecture_bzi}. We consider 
weighted games with two types of players. There are $a$ players of weight $w_1$ and $b$ players of weight $w_2$, where the quota is 
$\left\lceil (aw_1+bw_2)/2\right\rceil$. For each given pair $(w_1,w_2)$ we consider all cases of $2\le a+b\le 1000$ and compute $\Vert\cdot\Vert_1$ 
distance between $\BZI$ and the relative weight distribution. We compute the smallest possible $\sup C\cdot \Lambda(w)$ such that Conjecture~\ref{conjecture_bzi} 
remains true and state the parameters where it is attained.  

From the theoretical side we remark that for most weight vectors $w\in\mathbb{N}^n$ the number of coalitions with a given weight can be relatively good 
approximated by a suitable normal distribution.\footnote{The Frobenius number for two coprime integers $x$ and $y$ is $xy-x-y$ and states the largest number than cannot 
be written as a (non-negative) sum of $x$s and $y$s. So, if the maximum weight is not too large there have to be enough players with weight $x$ and with weight $y$. If there 
are rather few players of some given weight, then those players in total have an inferior impact on the $\Vert\cdot\Vert_1$-distance.} In a small surrounding of the 
expectation\footnote{For a relative quota of $\tfrac{1}{2}$ the expectation is near to $q$.} the curve is rather horizontal -- for simplicity look at ${{2n}\choose {n+i}}$ 
for $i\in\mathbb{Z}$. In the computation of the number of $i$-swings we count coalitions $s$ such that $w(S)<q$ and $w(S\cup\{i\})\ge q$, so that we need this 
flatness between $q-w_i+1$ and $q$, c.f.~Equation~(\ref{eq_eta_1}). Thus, a larger span $\Lambda(w)$ might increase the $\Vert\cdot\Vert_1$-distance for two reasons. 

\begin{table}
  \begin{center}
    \begin{tabular}{rrrrr|rrrrr}
      \hline
      $w_1$ & $w_2$ & $\sup C\cdot \Lambda(w)$ & $a+b$ & $(a,b)$&
      $w_1$ & $w_2$ & $\sup C\cdot \Lambda(w)$ & $a+b$ & $(a,b)$\\ 
      \hline
      1 & 2 & 1.0000 & 2 & $(1,1)$ & % 1
      1 & 9 & 2.9675 & 999 & $(13,986)$ \\ % 1-18
      1 & 3 & 1.3333 & 3 & $(2,1)$ & % 1,2,3
     1 & 11 & 3.5805 & 999 & $(19,980)$ \\ % 1-26 
      2 & 3 & 1.3333 & 2 & $(1,1)$ & % 1,2
     1 & 13 & 4.1971 & 999 & $(27,972)$ \\ % 1-35
      1 & 4 & 1.5000 & 4 & $(3,1)$ & % 1,3
     1 & 16 & 5.1279 & 1000 & $(41,958)$ \\ % 1-42  
      3 & 4 & 1.5000 & 2 & $(1,1)$ & % 1,2,3
     1 & 20 & 6.3731 & 999 & $(63,9,36)$ \\ % 1-63 
      1 & 5 & 1.7973 & 999 & $(3,996)$ & % 1,3,4,5,6
     1 & 25 & 7.9279 & 999 & $(97,902)$ \\ % 1-97
      2 & 5 & 1.6000 & 3 & $(2,1)$ & % 1,2,5
     2 & 25 & 4.0423 & 999 & $(25,974)$ \\ % 1-40 
      3 & 5 & 1.5968 & 999 & $(2,997)$ & % 1,2
     3 & 25 & 3.0246 & 999 & $(20,979)$ \\ % 1-20
      4 & 5 & 1.7988 & 1000 & $(3,997)$ & % 1,2,3,4,6
     4 & 25 & 2.5505 & 999 & $(12,987)$ \\ % 1-12
      1 & 6 & 2.0794 & 999 & $(5,994)$ & % 1,3,5,6,7,8
     1 & 35 & 11.0181 & 999 & $(191,808)$ \\ %  
      5 & 6 & 2.0810 & 1000 & $(5,995)$ & % 1,2,3,4,5,7,9
     1 & 50 & 15.5376 & 999 & $(375,624)$ \\ % 1-375 
      \hline
    \end{tabular}
    %% computed with limit_banzhaf.cpp
    \caption{Numerical evaluation of Conjecture~\ref{conjecture_bzi}.}
    \label{table_conjecture_1}
  \end{center}
\end{table}

At this place it is appropriate to discuss the relation between approximation errors in the $\Vert\cdot \Vert_1$ norm and 
relative deviations as in Theorem~\ref{thm_bzi}.

\begin{lemma}
  \label{lemma_quotient}
  Let $x,w\in\mathbb{R}_{\ge 0}^n$ with $x_j=x_h$ for all $1\le j,h\le n$ with $w_j=w_h$.
  \begin{enumerate}
    \item[(a)] If $\Vert x-w\Vert_1\le \varepsilon$, then 
    $$
      1-\frac{\varepsilon}{\alpha_i}\le \frac{x_i}{w_i} \le 1+\frac{\varepsilon}{\alpha_i}
    $$
    for all $1\le i\le n$ with $w_i>0$, where $S_i=\{ 1\le j\le n\,:\,w_i=w_j\}$ and $\alpha_i=w(S_i)$.
    \item[(b)] If $w_i,w_j,x_i,x_j\neq 0$, $\varepsilon_i,\varepsilon_j\in[0,1)$ with 
  $1-\varepsilon_i\le\frac{x_i}{w_i}\le 1+\varepsilon _i$ 
  and $1-\varepsilon_j\le\frac{x_j}{w_j}\le 1+\varepsilon_j$, then
  $$
    \frac{1-\varepsilon_i}{1+\varepsilon_j}\le \frac{w_j}{w_i}\cdot \frac{x_i}{x_j}\le \frac{1+\varepsilon_i}{1-\varepsilon_j}
    \quad\text{and}\quad
    \left|\frac{x_i}{w_i}-\frac{x_j}{w_j}\right|\le \varepsilon_i+\varepsilon_j. 
  $$
  \end{enumerate}  
\end{lemma}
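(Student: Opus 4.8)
The plan is to prove the two parts separately; each one reduces to an elementary one-variable estimate, and it is the structural hypothesis $x_j=x_h$ whenever $w_j=w_h$ that makes part~(a) work.

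For part~(a), fix an index $i$ with $w_i>0$ and restrict the $\ell_1$-error to the class $S_i=\{1\le j\le n: w_j=w_i\}$. By hypothesis $x$ is constant on $S_i$ with common value $x_i$, and $w$ is constant on $S_i$ with common value $w_i$, so
$$
  \sum_{j\in S_i}\left|x_j-w_j\right|=|S_i|\cdot\left|x_i-w_i\right|.
$$
Since $\alpha_i=w(S_i)=|S_i|\cdot w_i$ we may write $|S_i|=\alpha_i/w_i$, and the left-hand side is at most $\Vert x-w\Vert_1\le\varepsilon$; dividing $\tfrac{\alpha_i}{w_i}\left|x_i-w_i\right|\le\varepsilon$ by $\alpha_i>0$ and rearranging yields $\left|x_i/w_i-1\right|\le\varepsilon/\alpha_i$, which is the claimed two-sided bound.

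For part~(b), put $a=x_i/w_i>0$ and $b=x_j/w_j>0$, so the hypotheses say $a\in[1-\varepsilon_i,\,1+\varepsilon_i]$ and $b\in[1-\varepsilon_j,\,1+\varepsilon_j]$; here $\varepsilon_i,\varepsilon_j<1$ guarantees $1-\varepsilon_i>0$ and $1-\varepsilon_j>0$, so the reciprocals below are well defined. The ratio bound is obtained by dividing the two interval estimates, e.g.
$$
  \frac{1-\varepsilon_i}{1+\varepsilon_j}\;\le\;\frac{x_i/w_i}{x_j/w_j}\;=\;\frac{w_j}{w_i}\cdot\frac{x_i}{x_j}\;\le\;\frac{1+\varepsilon_i}{1-\varepsilon_j},
$$
and the bound for the reciprocal ratio follows by exchanging the roles of $i$ and $j$. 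The additive estimate is the triangle inequality applied at the point $1$:
$$
  \left|\frac{x_i}{w_i}-\frac{x_j}{w_j}\right|=|a-b|\le|a-1|+|1-b|\le\varepsilon_i+\varepsilon_j.
$$

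There is essentially no obstacle: both parts are bookkeeping, and the only points deserving a line of care are that $w_i>0$ (respectively $\varepsilon_i,\varepsilon_j<1$) keeps every denominator strictly positive, and that it is precisely the ``equal weights force equal $x$-values'' assumption which collapses the $\ell_1$-contribution of the whole class $S_i$ to the single term $|S_i|\cdot|x_i-w_i|$, thereby turning a global $\ell_1$-bound into a coordinatewise relative bound scaled by the mass $\alpha_i$ of the class.
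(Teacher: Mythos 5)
Your part (a) is the paper's argument in direct form: the paper phrases it as a contradiction (if $|x_i/w_i-1|>\varepsilon/\alpha_i$ then $\sum_{j\in S_i}|x_j-w_j|=|S_i|\,|x_i-w_i|>\varepsilon$), while you run the same computation forwards using $\alpha_i=|S_i|\cdot w_i$; these are the same proof. For part (b), which the paper dismisses as trivial and does not prove, your verification is essentially right, but note one wrinkle: your displayed inequality bounds $\tfrac{w_j}{w_i}\cdot\tfrac{x_i}{x_j}$ by $\tfrac{1-\varepsilon_i}{1+\varepsilon_j}$ and $\tfrac{1+\varepsilon_i}{1-\varepsilon_j}$, and exchanging $i$ and $j$ then gives
$$
  \frac{1-\varepsilon_j}{1+\varepsilon_i}\le \frac{w_i}{w_j}\cdot\frac{x_j}{x_i}\le \frac{1+\varepsilon_j}{1-\varepsilon_i},
$$
i.e.\ the bounds with $\varepsilon_i$ and $\varepsilon_j$ transposed relative to the display printed in the lemma; the printed display (take $\varepsilon_i=0$, $\varepsilon_j=\tfrac12$, $x_i=w_i$, $x_j=\tfrac12 w_j$) is in fact violated as literally written, so the lemma's statement has the indices swapped (the printed bounds are exactly what your displayed line establishes for the reciprocal quantity). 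This is a typo in the statement rather than a gap in your argument, and it is immaterial for the paper's use of the lemma, where only smallness of both $\varepsilon_i$ and $\varepsilon_j$ matters; your triangle-inequality step for $\left|\tfrac{x_i}{w_i}-\tfrac{x_j}{w_j}\right|\le\varepsilon_i+\varepsilon_j$ is exactly right.
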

\begin{proof}
  Only part (a) is non-trivial. If $x_i/w_i>1+\varepsilon/\alpha_i$ or $x_i/w_i<1-\varepsilon/\alpha_i$ then 
      $$
        \Vert x-w\Vert_1\ge \sum_{j\in S_i} \left|x_j-w_j\right|=|S_i|\cdot \left|x_i-w_i\right|
        > |S_i|\cdot w_i\cdot \varepsilon/\alpha_i=\varepsilon,
      $$
      a contradiction.     
\end{proof}
So, if $\Vert x-w\Vert_1$ is small, then $x_i/w_i$ is near to $1$ and $x_i/x_j$ is near to $w_i/w_j$ provided that the numbers 
are non-zero and the involved players each belong to a family of players with equal weights and non-vanishing weight share. 
Assumptions on $x$, i.e., non-negativity and symmetry, are rather mild and satisfied by any published power index. If true, 
Conjecture~\ref{conjecture_bzi} would imply Theorem~\ref{thm_bzi}. Indeed a small relative deviation is a tighter assumption 
than a small $\Vert\cdot\Vert_1$ distance.
\begin{lemma} 
  \label{lemma_quotient_reverse}
  Let $S\subseteq N=\{1,\dots,n\}$, $\hat{\varepsilon},\tilde{\varepsilon},\varepsilon\in\mathbb{R}_{>0}$, and $x,w\in\mathbb{R}_{\ge 0}^n$
  with $w(N)\le 1$, $w(N\backslash S)\le \hat{\varepsilon}$, $x(N\backslash S)\le \tilde{\varepsilon}$, and 
  $1-\varepsilon\le x_i/w_i\le 1+\varepsilon$ for all $i\in S$, then $\Vert x-w\Vert_1\le \hat{\varepsilon}+\tilde{\varepsilon}+\varepsilon$. 
\end{lemma}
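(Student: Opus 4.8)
The plan is to split the $\ell_1$-distance according to the partition $N = S \,\dot\cup\, (N\setminus S)$ and bound each piece separately, so that
\[
  \Vert x-w\Vert_1 \;=\; \sum_{i\in S}\left|x_i-w_i\right| \;+\; \sum_{i\in N\setminus S}\left|x_i-w_i\right|.
\]
Here I read the hypotheses in the natural way, i.e., $w_i>0$ for $i\in S$ (so that the ratios $x_i/w_i$ are defined) and $x,w$ nonnegative (so that the sums $w(N\setminus S)$ and $x(N\setminus S)$ are genuinely sums of absolute values and $w(S)\le w(N)$); these are exactly the features one has in the intended application to weight vectors and power indices.

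For the first sum I would use the relative-deviation bound directly: for $i\in S$ we have $\left|x_i-w_i\right| = w_i\cdot\left|\tfrac{x_i}{w_i}-1\right|\le \varepsilon\, w_i$, hence
\[
  \sum_{i\in S}\left|x_i-w_i\right| \;\le\; \varepsilon\cdot w(S) \;\le\; \varepsilon\cdot w(N) \;\le\; \varepsilon.
\]
For the second sum, where no control on the ratios is available, I would fall back on the triangle inequality $\left|x_i-w_i\right|\le |x_i|+|w_i| = x_i + w_i$ for $i\in N\setminus S$, which yields
\[
  \sum_{i\in N\setminus S}\left|x_i-w_i\right| \;\le\; x(N\setminus S) + w(N\setminus S) \;\le\; \tilde{\varepsilon} + \hat{\varepsilon}.
\]
Adding the two estimates gives $\Vert x-w\Vert_1\le \hat{\varepsilon}+\tilde{\varepsilon}+\varepsilon$, as claimed.

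There is essentially no obstacle here; the proof is a one-line case split. The only point that deserves a word of care — and the place where one should be explicit — is the implicit nonnegativity (and, on $S$, positivity) of $w$, which is what lets us pass from the coordinatewise ratio bound to $\sum_{i\in S}|x_i-w_i|\le\varepsilon\,w(N)$ and from $|x_i|+|w_i|$ to $x_i+w_i$ on the complement. Everything else is the triangle inequality together with the monotonicity $w(S)\le w(N)\le 1$.
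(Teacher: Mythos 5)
Your proof is correct and follows essentially the same route as the paper: bound $\sum_{i\in S}|x_i-w_i|$ by $\varepsilon\, w(S)\le\varepsilon$ using the ratio hypothesis, and bound the contribution of $N\setminus S$ by $x(N\setminus S)+w(N\setminus S)\le\tilde{\varepsilon}+\hat{\varepsilon}$ (the paper merely writes the $S$-part via the split into $\{x_i\ge w_i\}$ and $\{x_i<w_i\}$ instead of taking absolute values directly). Your explicit remark about the implicit nonnegativity of $x,w$ and positivity of $w_i$ on $S$ is apt, since the paper tacitly uses the same assumptions.
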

\begin{proof}
%%\noindent
%%\textit{Proof.} 
  Let $S^+=\{i\in S\,:\, x_i\ge w_i\}$ and $S^-=\{i\in S\,:\,x_i<w_i\}$, then
  \begin{eqnarray*}
    \Vert x-w\Vert_1 &=& \sum_{i\in N}\left|x_i-w_i\right|=\left(x(S^+)-w(S^+)\right)+\left(w(S^-)-x(S^-)\right)\\
    &\le& w(N\backslash S)+x(N\backslash S)+w(S)\cdot\varepsilon\le \hat{\varepsilon}+\tilde{\varepsilon}+\varepsilon.%%\quad\quad\quad\quad\quad\quad\quad\quad\quad\quad\quad\quad\quad\quad\blacksquare
  \end{eqnarray*}
\end{proof}
In words, if we assume a small relative deviation for all players except a subset of players with a small mass in terms of $x$ and $w$, then   
the $\Vert\cdot\Vert_1$ distance is small.

\medskip

So far we have always assumed a relative quota of $q=\tfrac{1}{2}$ for the Penrose-Banzhaf index. For $q\in(0,1]\backslash\tfrac{1}{2}$ we 
can consider the weighted game $v_{n,q}=[q\cdot 3n;2,\dots,2,1,\dots,1]$ with $n$ players of weight $2$ and another $n$ players of weight $1$.
For each quota $q$ there exists a constant $\varepsilon>0$ such that $\Vert \BZI(v_{n,q})-\overline{w}_{n,q}\Vert_1\ge \varepsilon$ 
for all sufficiently large $n$, where $\overline{w}_{n,q}$ denotes the corresponding relative weight vector, see Proposition~\ref{prop_example_3} 
in the appendix for a more refined statement. Lemma~\ref{lemma_quotient_reverse} implies that the ratio between the Penrose-Banzhaf power of 
players of weight $2$ and players of weight $1$ does not converge to $2$. This example also implies that we cannot have an upper bound 
of the form\\[-3mm]
$$
  \Vert \BZI([q;w])-w\Vert_1\le \frac{C\cdot \Delta(w)^\alpha\cdot \Lambda(w)^\beta}
  {\min\{q,1-q\}^\gamma}
$$
for each $q\in(0,1)$, $w\in\mathbb{R}_{>0}^n$ with $\Vert w\Vert_1=1$, where $C,\alpha,\beta,\gamma\in\mathbb{R}_{>0}$ are arbitrary 
constants. So, there is little room for limit results for the Penrose-Banzhaf index for quotas $q\neq\tfrac{1}{2}$.

With respect to the Shapley-Shubik index we state:
\begin{conjecture}
  \label{conjecture_ssi}
  For each $q\in(0,1)$ and $w\in\mathbb{R}_{>0}^n$ with $\Vert w\Vert_1=1$ we have
  $$
    \Vert \operatorname{SSI}([q;w])-w\Vert_1\le \frac{5\Delta(w)}{\min\{q,1-q\}}.
  $$
\end{conjecture}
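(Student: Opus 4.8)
The plan is to prove the bound through the random-order representation of the Shapley--Shubik index, isolating a ``fluid'' and a ``diffusive'' contribution. First I would reduce to $q\le\tfrac12$: the Shapley--Shubik index is self-dual, so $\operatorname{SSI}([q;w])=\operatorname{SSI}([1-q;w])$ whenever $q$ is not a subset sum of $w$ (the finitely many exceptional quotas are handled by a perturbation), while the right-hand side of the claim is invariant under $q\mapsto 1-q$. Next, since $\operatorname{SSI}([q;w])$ and $w$ both lie in the probability simplex we have $\Vert\operatorname{SSI}([q;w])-w\Vert_1\le 2$, so the inequality is vacuous unless $q>\tfrac52\Delta(w)$; hence we may assume $\tfrac52\Delta(w)<q\le\tfrac12$, which in particular gives $0<q-w_i<q<1-w_i$ for every $i$. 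Finally, recall the probabilistic form of the index: $\operatorname{SSI}_i$ is the probability that player $i$ is pivotal for a uniformly random arrival order, and averaging over $i$'s arrival position turns this into
\[ \operatorname{SSI}_i=\int_0^1\Pr\big[q-w_i\le A_i(t)<q\big]\,\mathrm dt,\qquad A_i(t)=\sum_{j\ne i}w_jB_j, \]
with the $B_j$ independent $\operatorname{Bernoulli}(t)$ random variables.

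For the core estimate, let $\sigma_c:=\inf\{t\in[0,1]:A_i(t)\ge c\}$ be the first ``time'' the weight accumulated by players other than $i$ reaches level $c$. Since $A_i(\cdot)$ is non-decreasing, $\{t:A_i(t)\in[q-w_i,q)\}$ is an interval of length $\sigma_q-\sigma_{q-w_i}$, so by Fubini $\operatorname{SSI}_i=\mathbb{E}[\sigma_q-\sigma_{q-w_i}]=G_i(q)-G_i(q-w_i)$, where $G_i(c):=\mathbb{E}[\sigma_c]=\int_0^1\Pr[A_i(t)<c]\,\mathrm dt$. Because $\mathbb{E}[A_i(t)]=(1-w_i)t$ and $\operatorname{Var}(A_i(t))=t(1-t)\sum_{j\ne i}w_j^2\le t(1-t)\Delta(w)$, a concentration bound shows $G_i(c)$ is close to the fluid value $c/(1-w_i)$, so $\operatorname{SSI}_i\approx w_i/(1-w_i)$; this part of the discrepancy contributes at most $\sum_i\big(\tfrac{w_i}{1-w_i}-w_i\big)=\sum_i\tfrac{w_i^2}{1-w_i}\le\tfrac{\Delta(w)}{1-\Delta(w)}$ to the $\ell_1$-distance, with no $1/q$ factor. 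The dominant remaining term is the diffusive correction $\operatorname{SSI}_i-\tfrac{w_i}{1-w_i}=\int_{q-w_i}^{q}\big(G_i'(c)-\tfrac1{1-w_i}\big)\,\mathrm dc$, and the task is to bound $\sum_i\big|\int_{q-w_i}^{q}\big(G_i'(c)-\tfrac1{1-w_i}\big)\,\mathrm dc\big|$ by a constant times $\Delta(w)/q$: here $G_i'(c)$ is the time-averaged local density of the process $A_i(\cdot)$ at level $c$, it integrates to $1/(1-w_i)$ over $c\in[0,1-w_i]$, and one needs to control its oscillation over the window $[q-w_i,q]$ of width $w_i\le\Delta(w)$ via quantitative local-limit or renewal estimates. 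The factor $1/\min\{q,1-q\}$ is forced because near $c=0$ (respectively $c=1-w_i$) few (respectively almost all) of the Bernoulli variables are active and the process is outside its diffusive regime -- already visible in $G_i$ being constant equal to $1/n$ for $0<c<\min_j w_j$.

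The hard part will be obtaining the \emph{linear} dependence on $\Delta(w)$ rather than the $\sqrt{\Delta(w)}$ that a crude use of concentration yields; this must exploit a cancellation, either the averaging over all thresholds $t$ (so that the fluctuations of $A_i(t)$ wash out) or a telescoping identity for $\sum_i(\operatorname{SSI}_i-w_i)$ in the spirit of $\sum_i\operatorname{SSI}_i=\sum_i w_i=1$. The natural tool is an effective version of the renewal-theoretic argument behind Neyman's theorem \cite{neyman1982renewal}, whose constants are not explicit as stated there; pinning down that $5$ (or, more modestly, some explicit constant) is admissible is the real content. An alternative route would imitate the excess-balancing argument used for the nucleolus in \cite{kurz2014nucleolus}, although the two objects are structurally quite different. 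One should also verify that the $1/\min\{q,1-q\}$ dependence is of the right order, for instance by revisiting the family $v_{n,q}$ from the discussion preceding Conjecture~\ref{conjecture_ssi} together with Proposition~\ref{prop_impossible}.
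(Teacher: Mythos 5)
The statement you are trying to prove is Conjecture~\ref{conjecture_ssi}: the paper does not prove it, but only remarks that it holds for the three parametric families of examples, so there is no proof of record to compare your argument against. Judged on its own terms, what you have written is a research programme rather than a proof, and the gap is exactly where you locate it yourself: every step that is actually carried out (self-duality to reduce to $q\le\tfrac12$, discarding $\min\{q,1-q\}\le\tfrac52\Delta(w)$ because the left-hand side is at most $2$, the integral representation $\operatorname{SSI}_i=\int_0^1\Pr[q-w_i\le A_i(t)<q]\,\mathrm{d}t$ and the rewriting as $G_i(q)-G_i(q-w_i)$) is correct but soft, while the quantitative heart of the matter --- bounding $\sum_i\bigl|\int_{q-w_i}^{q}\bigl(G_i'(c)-\tfrac{1}{1-w_i}\bigr)\mathrm{d}c\bigr|$ by an absolute constant times $\Delta(w)/\min\{q,1-q\}$, with the constant pinned to $5$ --- is only announced. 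A crude concentration or Berry--Esseen bound over the window of width $w_i$ gives an error of order $\sqrt{\Delta(w)}$ per player, which after summation is far weaker than the claimed linear dependence; the cancellation mechanism you invoke (averaging over $t$, or an effective version of the renewal argument of \cite{neyman1982renewal}, whose constants are existential) is precisely what is missing, and the paper itself signals, in the discussion of Proposition~\ref{prop_example_3}, that available local limit theorems with explicit error terms (e.g.\ \cite{petrov1975sums}) are too weak for this kind of task.

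Two smaller points you should tighten if you pursue this. First, your ``fluid'' baseline $w_i/(1-w_i)$ sums to more than $1$ while $\sum_i\operatorname{SSI}_i=1$, so the diffusive correction is systematically negative in aggregate; the decomposition therefore does not by itself produce the cancellation, and the $\tfrac{\Delta(w)}{1-\Delta(w)}$ term you attribute to the fluid part can already exceed $2\Delta(w)$ when $\Delta(w)$ is close to $1$, eating into the budget for the constant $5$ unless you also use the reduction $\min\{q,1-q\}>\tfrac52\Delta(w)$ there. Second, the duality step needs care at quotas that are subset sums of $w$: the dual of $[q;w]$ is the game with strict inequality at level $1-q$, and since the left-hand side jumps at such quotas while the right-hand side is continuous, a limiting/perturbation argument must be stated for the closed game, not waved at. None of this is fatal to the plan, but as it stands the proposal establishes the easy reductions and leaves the conjecture itself open, which is also the status it has in the paper.
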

We remark that Conjecture~\ref{conjecture_ssi} is valid for all of our three parametric examples as well as 
for all weighted games with at most $9$ players. For the later we have chosen the normalization of a minimum sum integer representation.\footnote{The choice of a specific 
representation is justified by the fact that $\Vert w-w'\Vert_1\le \frac{4\Delta(w)}{\min\{q,1-q\}}$ for two normalized representations of the same weighted 
game, i.e., $[q;w]=[q';w']$, see \cite{weight_polytope}.} For $3\le n\le 9$ the worst case was attained by the game $[n-1;n-1,\underset{n-1}{\underbrace{1,\dots,1}}]$, 
which leads to a $\Vert\cdot\Vert_1$-distance of $\frac{n-2}{n}$ that tends to $1$. 

\section*{Acknowledgements}
I would like to thank the anonymous referee for helpful remarks.

%% \bibliographystyle{abbrv}
%% \bibliography{limit_banzhaf}

\begin{thebibliography}{10}

\bibitem{banzhaf1964weighted}
J.~F. Banzhaf~III.
\newblock Weighted voting doesn't work: a mathematical analysis.
\newblock {\em Rutgers Law Review}, 19:317, 1964.

\vspace*{-2mm}

\bibitem{dubey1979mathematical}
P.~Dubey and L.~S. Shapley.
\newblock Mathematical properties of the {B}anzhaf power index.
\newblock {\em Mathematics of Operations Research}, 4(2):99--131, 1979.

\vspace*{-2mm}

\bibitem{weight_polytope}
S.~Kurz.
\newblock Bounds for the diameter of the weight polytope.
\newblock {\em Submitted}, 2018.

\vspace*{-2mm}

\bibitem{kurz2014nucleolus}
S.~Kurz, S.~Napel, and A.~Nohn.
\newblock The nucleolus of large majority games.
\newblock {\em Economics Letters}, 123(2):139--143, 2014.

\vspace*{-2mm}

\bibitem{laakso1979effective}
M.~Laakso and R.~Taagapera.
\newblock Effective number of parties: A measure with application to {W}estern
  {E}urope.
\newblock {\em Comparative Political Studies}, 12(1):3--27, 1979.

\vspace*{-2mm}

\bibitem{lindner2004ls}
I.~Lindner and M.~Machover.
\newblock {L}{S} {P}enrose's limit theorem: proof of some special cases.
\newblock {\em Mathematical Social Sciences}, 47(1):37--49, 2004.

\vspace*{-2mm}

\bibitem{lindner2007cases}
I.~Lindner and G.~Owen.
\newblock Cases where the {P}enrose limit theorem does not hold.
\newblock {\em Mathematical Social Sciences}, 53(3):232--238, 2007.

\vspace*{-2mm}

\bibitem{neyman1982renewal}
A.~Neyman.
\newblock Renewal theory for sampling without replacement.
\newblock {\em The Annals of Probability}, pages 464--481, 1982.

\vspace*{-2mm}

\bibitem{penrose1952objective}
L.~S. Penrose.
\newblock {\em On the objective study of crowd behaviour}.
\newblock HK Lewis, 1952.

\vspace*{-2mm}

\bibitem{petrov1975sums}
V.~V. Petrov.
\newblock {\em Sums of independent random variables}.
\newblock Springer-Verlag, Berlin, Heidelberg, New York, 1975.

\vspace*{-2mm}

\bibitem{schmeidler1969nucleolus}
D.~Schmeidler.
\newblock The nucleolus of a characteristic function game.
\newblock {\em SIAM Journal on Applied Mathematics}, 17(6):1163--1170, 1969.

\vspace*{-2mm}

\bibitem{shapley1954method}
L.~S. Shapley and M.~Shubik.
\newblock A method for evaluating the distribution of power in a committee
  system.
\newblock {\em American Political Science Review}, 48(3):787--792, 1954.

\end{thebibliography}
{\small

}

\appendix
\section*{Appendix}

In order to prove Proposition~\ref{prop_example_1} and Proposition~\ref{prop_example_2} we need a small numerical estimate and 
a tightening of the general bound $\Vert x\Vert_\infty\le\Vert x\Vert_1$ in our setting.

\begin{lemma}
  \label{lemma_numerical_1}
  For $n\ge 11$ we have $2n^3/2.6^n\le \tfrac{1}{n}$.
\end{lemma}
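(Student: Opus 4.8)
The estimate $2n^3/2.6^n\le\tfrac1n$ is, after multiplying by $n\cdot 2.6^n>0$, equivalent to $2n^4\le 2.6^n$, so it suffices to establish this inequality for every integer $n\ge 11$. The plan is a short induction on $n$. For the base case $n=11$ one checks directly that $2\cdot 11^4 = 29282$ while $2.6^{11}>3.6\cdot 10^4$; this is a finite computation requiring no idea. For the induction step, assume $2n^4\le 2.6^n$ for some $n\ge 11$. Then
$$
  2(n+1)^4 \;=\; 2n^4\cdot\left(1+\tfrac1n\right)^4 \;\le\; 2.6^n\cdot\left(1+\tfrac1n\right)^4,
$$
so it remains to verify that $\left(1+\tfrac1n\right)^4\le 2.6$ for all $n\ge 11$. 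Since $t\mapsto(1+1/t)^4$ is decreasing in $t>0$, its supremum over $n\ge 11$ is attained at $n=11$, where $\left(\tfrac{12}{11}\right)^4=\tfrac{20736}{14641}<1.42<2.6$. This closes the induction and hence proves the lemma.

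An equivalent route, if one prefers to avoid induction, is to pass to logarithms and show that $g(x):=x\ln 2.6-4\ln x-\ln 2\ge 0$ for real $x\ge 11$. One has $g(11)>0$ by the base-case computation above, and $g'(x)=\ln 2.6-\tfrac{4}{x}\ge \ln 2.6-\tfrac{4}{11}>0.95-0.37>0$ for $x\ge 11$, so $g$ is increasing on $[11,\infty)$ and therefore nonnegative there.

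There is no genuine obstacle in this lemma: it is a purely numerical estimate. The only point that needs a little care is the choice of the threshold $n=11$, which must be large enough that the per-step growth factor $\left(1+\tfrac1n\right)^4$ (equivalently, the derivative correction $\tfrac4n$) is comfortably below $2.6$ (equivalently, below $\ln 2.6$); the computations above show that $11$ is more than sufficient.
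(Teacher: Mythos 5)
Your proposal is correct, and it is essentially the paper's argument: the paper likewise reduces the claim to $f(n)=2n^4/2.6^n\le 1$ and shows $f$ is decreasing for $n>4/\ln(2.6)\approx 4.19$ via its derivative, then checks $f(11)<1$, which is exactly your logarithmic route (your $g$ is $-\ln f$). Your induction with the growth factor $\left(1+\tfrac{1}{n}\right)^4<2.6$ is just a discrete version of the same monotonicity argument, so no substantive difference.
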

%%\begin{proof}
%%  Let $f(n)=2n^4/2.6^n$. Since $f'(n)=-2n^3(n\ln(2.6)-4)/2.6^n$ and $n\ln(2.6)-4>0$ for $n>4.19$, we have $f'(n)<0$. 
%%  Thus, $f(n)\le f(11)<1$ for $n\ge 11$.
%%\end{proof}

\begin{lemma}
  \label{lemma_improved_relation_infty_1}
  For $w,w'\in\mathbb{R}^n_{\ge 0}$ with $\Vert w\Vert_1=\Vert w'\Vert_1=1$, we have 
  $\Vert w-w'\Vert_{\infty}\le \frac{1}{2}\Vert w-w'\Vert_1$.
\end{lemma}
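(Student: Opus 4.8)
The plan is to exploit that both weight vectors are non-negative and normalized, so that the difference vector $d:=w-w'$ has coordinate sum exactly zero: $\sum_{i=1}^n d_i=\sum_i w_i-\sum_i w_i'=\Vert w\Vert_1-\Vert w'\Vert_1=0$, where the first equality uses $w,w'\ge 0$. This is the only place where non-negativity enters, and it is precisely what makes the extra factor $\tfrac{1}{2}$ available; for arbitrary vectors of equal $\Vert\cdot\Vert_1$-norm the sharpened bound is false (consider $w-w'=(1,1,-1,-1)$ scaled appropriately).

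Next I would split $d$ into its positive and negative parts. Write $P=\{i:d_i>0\}$ and $M=\{i:d_i<0\}$ and set $s=\sum_{i\in P}d_i$. Since $\sum_i d_i=0$ we also have $\sum_{i\in M}(-d_i)=s$, hence $\Vert d\Vert_1=\sum_{i\in P}d_i+\sum_{i\in M}(-d_i)=2s$.

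Finally, bound each coordinate against $s$: if $d_i\ge 0$ then $|d_i|=d_i\le\sum_{j\in P}d_j=s$, and if $d_i<0$ then $|d_i|=-d_i\le\sum_{j\in M}(-d_j)=s$. Therefore $\Vert d\Vert_\infty=\max_i|d_i|\le s=\tfrac{1}{2}\Vert d\Vert_1$, which is the assertion. There is no genuine obstacle here: the argument is two lines once the coordinate-sum-zero observation is in place. The only point worth flagging is that this improvement over the general inequality $\Vert\cdot\Vert_\infty\le\Vert\cdot\Vert_1$ rests on the non-negativity hypothesis and not merely on $\Vert w\Vert_1=\Vert w'\Vert_1$.
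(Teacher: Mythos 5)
Your proof is correct and is essentially the paper's own argument: both split the difference vector into its positive and negative parts, use that these two parts have equal sum because the coordinate sums of $w$ and $w'$ coincide (which is where non-negativity enters), and bound each coordinate by that common sum, which equals $\tfrac{1}{2}\Vert w-w'\Vert_1$. Only your parenthetical aside is off: a difference vector such as $(1,1,-1,-1)$ has zero coordinate sum and therefore satisfies the sharpened bound automatically; a genuine failure without non-negativity is, e.g., $w=(1)$, $w'=(-1)$, where $\Vert w-w'\Vert_\infty=2>1=\tfrac{1}{2}\Vert w-w'\Vert_1$.
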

\begin{proof}
  With $S:=\{1\le i\le n\mid w_i\le w'_i\}$ and $A:=\sum_{i\in S} \left(w'_i-w_i\right)$, 
  $B:=\sum_{i\in N\backslash S} \left(w_i-w'_i\right)$, where $N=\{1,\dots,n\}$, we have $A-B=0$ since $\Vert w\Vert_1=\Vert w'\Vert_1$ 
  and $w,w'\in\mathbb{R}_{\ge 0}^n$. Thus, $\Vert w-w'\Vert_1=2A$ and $\Vert w-w'\Vert_{\infty}\le \max\{A,B\}=A$.
\end{proof}

\noindent
\textit{Proof of Proposition~\ref{prop_example_1}.}
  We easily check $\Vert w\Vert_1=1$ and $v_n=[\tfrac{1}{2};w]$. 
  For $v=[q;k,1,\dots,1]$ with $m$ times weight $1$ we have $\eta_1(v)=\sum_{i=1}^{k}{m \choose {q-i}}$ and 
  $\eta_2(v)={{m-1}\choose{q-1}}+{{m-1}\choose{q-k-1}}$ so that
  \begin{equation}
    \label{eq_eta_1}
    \eta_1(v_n) = \sum_{i=1}^{2n^2} {{2n^3}\choose{n^3+n^2-i}}\ge {{2n^3}\choose{n^3}}
  \end{equation}
  and
  \begin{equation}
    \eta_2(v_n) = {{2n^3-1}\choose{n^3+n^2-1}}+{{2n^3-1}\choose{n^3-n^2-1}}
                = {{2n^3-1}\choose{n^3+n^2-1}}+{{2n^3-1}\choose{n^3+n^2}}
                = {{2n^3}\choose{n^3+n^2}} 
  \end{equation} 
  using $q=n^3+n^2$, $k=2n^2$, and $m=2n^3$. 
  
  Since $(1+\tfrac{1}{n})^n$ is monotonically increasing we have $(1+\tfrac{1}{n})^n\ge 2.6$ for $n\ge 11$, so that
  \begin{eqnarray*}
    \frac{\eta_1(v_n)}{\eta_2(v_n)} &\ge & \frac{\left(n^3+n^2\right)!\left(n^3-n^2\right)!}{\left(n^3\right)!\left(n^3\right)!}
    =\frac{\prod\limits_{i=1}^{n^2}n^3+i}{\prod\limits_{i=1}^{n^2}n^3+i-n^2}\ge 
    \left(1+\frac{n^2}{n^3}\right)^{n^2}
    =\left(\left(1+\tfrac{1}{n}\right)^n\right)^n\ge 2.6^n.
  \end{eqnarray*}
  
  From
  $$
    \BZI_1(v_n) = \frac{\eta_1(v_n)}{\eta_1(v_n)+m\cdot \eta_2(v_n)}=1-\frac{m\cdot\eta_2(v_n)}{\eta_1(v_n)+m\cdot \eta_2(v_n)}
    \ge 1-m\cdot \frac{\eta_2(v_n)}{\eta_1(v_n)}\ge 1-\frac{2n^3}{2.6^n},
  $$
  $w_1=\tfrac{1}{n+1}\le\tfrac{1}{n}$, and $2n^3/2.6^n\le \tfrac{1}{n}$ for $n\ge 11$, see Lemma~\ref{lemma_numerical_1}, we deduce
  $$
    \Vert \BZI(v_n)-w\Vert_\infty \ge \left|\BZI_1(v_n)-w_1\right|\ge 1-\tfrac{2}{n}.
  $$
  From Lemma~\ref{lemma_improved_relation_infty_1} we then conclude $\Vert \BZI(v_n)-w\Vert_1\ge 2-\tfrac{4}{n}$.\hfill{$\blacksquare$}

\medskip

\noindent
\textit{Proof of Proposition~\ref{prop_example_2}.}
  We easily check $\Vert w\Vert_1=1$ and $v_n=[\tfrac{1}{2};w]$. For players $1\le i\le 2n+1$ 
  examples of swing coalitions are given by $n$ other players of weight $2n^2$ and $n^3$ players of weight $1$, so that 
  $$
    \eta_i(v_n)\ge {{2n}\choose{n}}\cdot {{2n^3}\choose{n^3}}.
  $$ 
  For players of weight $1$, i.e., $2n+2\le i\le 2n+1+2n^3$, we have
  \begin{eqnarray*}
    \eta_i(v_n)&=&\sum_{j=0}^{2n+1} {{2n+1}\choose j}\cdot{{2n^3-1}\choose{3n^3+n^2 -j\cdot 2n^2-1}}\\ 
    &\le& (n+1)\cdot {{2n+1}\choose n} \cdot{{2n^3-1}\choose{n^3-n^2-1}} +
          (n+1)\cdot {{2n+1}\choose n} \cdot{{2n^3-1}\choose{n^3+n^2-1}}\\
    &=& (n+1)\cdot {{2n+1}\choose n} \cdot{{2n^3}\choose{n^3+n^2}}
    = (2n+1) \cdot {{2n}\choose n} \cdot{{2n^3}\choose{n^3+n^2}}
  \end{eqnarray*}
  Similar as in the proof of Proposition~\ref{prop_example_1} we conclude 
  \begin{equation}
    \label{ie_est_prop_3}
    \frac{\BZI_1(v_n)}{\BZI_{2n+2}(v_n)}=\frac{\eta_1(v_n)}{\eta_{2n+2}(v_n)} \ge \frac{2.6^n}{2n+1},
  \end{equation}   
  noting that $\eta_1(v_n)=\eta_i(v_n)$ for all $1\le i\le 2n+1$ and $\eta_{2n+2}(v_n)=\eta_i(v_n)$ for all $2n+2\le i\le 2n+1+2n^3$ 
  due to symmetry. With this we compute
  \begin{eqnarray*}
    \BZI_1(v_n)-w_1 
    &= & \frac{\eta_1(v_n)}{(2n+1)\cdot\eta_1(v_n)+2n^3\cdot \eta_{2n+2}(v_n)}-\frac{1}{3n+1}\\
    &=& \frac{1}{2n+1}\cdot\left(1-\frac{2n^3\cdot\eta_{2n+2}(v_n)}{(2n+1)\cdot\eta_1(v_n)+2n^3\cdot \eta_{2n+2}(v_n)}\right)-\frac{1}{3n+1}\\
    &\overset{n\ge 3}{\ge}& \frac{1}{2n+1}\cdot\left(\frac{3}{10}-\frac{2n^3}{2n+1}\cdot\frac{\eta_{2n+2}(v_n)}{\eta_{1}(v_n)}\right)\\
    &\overset{\text{(\ref{ie_est_prop_3}})}{\ge}& \frac{1}{2n+1}\cdot\left(\frac{3}{10}-\frac{2n^3}{2.6^n}\right) 
          \overset{\text{Lemma~\ref{lemma_numerical_1}}}{\ge} \frac{1}{2n+1}\cdot\left(\frac{3}{10}-\frac{1}{n}\right)\overset{n\ge 10}{\ge} \frac{1}{2n+1}\cdot\frac{1}{5} 
  \end{eqnarray*}
  for $n\ge 11$ (using Inequality~(\ref{ie_est_prop_3}) and Lemma~\ref{lemma_numerical_1}). Thus
  $$
    \Vert \BZI(v_n)-w\Vert_1\ge \sum_{i=1}^{2n+1}\left|\BZI_i(v_n)-w_i\right|=
    (2n+1)\cdot \left|\BZI_1(v_n)-w_1\right|\ge \frac{1}{5}.\quad\hfill{\blacksquare} 
  $$

\medskip

The details for our briefly sketched last example from Section~\ref{sec_results} are given by:
\begin{proposition}
  \label{prop_example_3}
  For $n\in \mathbb{N}$ and $q\in [0,1]$ let 
  $$
    v_{n,q}=[q\cdot 3n;\overset{n}{\overbrace{2,\dots,2}},\overset{n}{\overbrace{1,\dots,1}}]
  $$ 
  with $n$ times weight $2$ and $n$ times weight $1$. Relative 
  weights for a relative quota of $q$ are given by $w_{n,q}=\left(2,\dots,2,1,\dots,1\right)/(3n)$, 
  i.e., $\Vert w_{n,q}\Vert_1=1$ and $v_{n,q}=[q;w_{n,q}]$. Then the function 
  $f(q):=\lim_{n\to\infty} \Vert \BZI(v_{n,q})-w_{n,q} \Vert_1$ satisfies $f(q)=f(1-q)\in[0,\tfrac{1}{3}]$ 
  for all $q\in[0,1]$ and is strictly monotonically increasing in $[\tfrac{1}{2},1]$. 
\end{proposition}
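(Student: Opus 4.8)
The plan is to reduce $\Vert\operatorname{BZI}(v_{n,q})-w_{n,q}\Vert_1$ to a single scalar — the limiting ratio of the Banzhaf scores of a weight-$2$ and a weight-$1$ player — then to evaluate that ratio by a saddle-point (local limit) estimate, and finally to obtain all three assertions from elementary calculus on the resulting closed form. Because $v_{n,q}$ is symmetric in the $n$ players of weight $2$ and in the $n$ players of weight $1$, $\operatorname{BZI}_i(v_{n,q})$ equals a common value $b_2^{(n)}$ for every weight-$2$ player $i$ and a common value $b_1^{(n)}$ for every weight-$1$ player, and $n\bigl(b_2^{(n)}+b_1^{(n)}\bigr)=1$. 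Put $r^{(n)}:=b_2^{(n)}/b_1^{(n)}=\eta_1(v_{n,q})/\eta_{n+1}(v_{n,q})$. From $b_2^{(n)}+b_1^{(n)}=\tfrac1n$ one gets $b_2^{(n)}-\tfrac{2}{3n}=-\bigl(b_1^{(n)}-\tfrac{1}{3n}\bigr)=\tfrac{1}{3n}\cdot\tfrac{r^{(n)}-2}{r^{(n)}+1}$, hence $\Vert\operatorname{BZI}(v_{n,q})-w_{n,q}\Vert_1=\tfrac{2\,|r^{(n)}-2|}{3(r^{(n)}+1)}$. So it is enough to prove $r^{(n)}\to r(q)$ for an explicit $r(q)\in(1,2]$ (which in particular establishes existence of the limit), giving $f(q)=\tfrac{2(2-r(q))}{3(r(q)+1)}$.

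\emph{Evaluating $r(q)$.} As $S\subseteq N\setminus\{i\}$ is an $i$-swing iff $w(S)<q\cdot3n\le w(S)+w_i$, counting the integers in the corresponding half-open interval gives $\eta_{n+1}(v_{n,q})=c_1(k^*)$ and $\eta_1(v_{n,q})=c_2(k^*)+c_2(k^*-1)$, where $k^*=\lceil q\cdot3n\rceil-1$ and $c_w(k)$ denotes the number of subsets of total weight $k$ of the $(2n-1)$-player set $N\setminus\{i\}$ with $i$ of weight $w\in\{1,2\}$. I would estimate these numbers by exponential tilting combined with Gnedenko's local central limit theorem (see, e.g., \cite{petrov1975sums}): under the tilted law in which player $h$ is included with probability $\theta^{w_h}/(1+\theta^{w_h})$, the total weight of a random subset is a sum of uniformly bounded independent integer-valued random variables of span $1$ (weight-$1$ players being present) and variance $\Theta(n)$, so choosing $\theta=\theta_n\to x(q)$ with tilted mean $q\cdot3n+O(1)$ yields $c(k+1)/c(k)\to 1/x(q)$ uniformly for $k=q\cdot3n+O(1)$, where $x(q)>0$ is the unique root of $\tfrac{2x^2}{1+x^2}+\tfrac{x}{1+x}=3q$. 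Combining this with the identities $c_0(k)=c_1(k)+c_1(k-1)=c_2(k)+c_2(k-2)$, valid for the count $c_0$ over the full $2n$-player ground set of $v_{n,q}$, gives $c_2(k^*)/c_1(k^*)\to(1+x(q))/(1+x(q)^2)$ and hence
$$r(q)=\frac{\bigl(1+x(q)\bigr)^2}{1+x(q)^2}.$$

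\emph{The three assertions.} From $r'(x)=2(1-x^2)/(1+x^2)^2$ for $r(x)=(1+x)^2/(1+x^2)$ we see that $r$ increases on $(0,1]$, decreases on $[1,\infty)$, attains its maximum $r(1)=2$, satisfies $r(x)\in(1,2]$, and $r(x)\to1$ as $x\to0^+$ or $x\to\infty$; the left-hand side of the equation defining $x(q)$ is strictly increasing in $x$, so $x(q)$ is strictly increasing with $x(\tfrac12)=1$, $x(q)\to0$ as $q\to0$ and $x(q)\to\infty$ as $q\to1$. Since $f(q)=\tfrac{2(2-r(q))}{3(r(q)+1)}$ is strictly decreasing in $r(q)\in(1,2]$, equals $0$ at $r=2$ and tends to $\tfrac13$ at $r=1$, we get $f(q)\in[0,\tfrac13]$ with endpoint values $f(0)=f(1)=\tfrac13$ (for $q=1$ this is also immediate, $v_{n,1}$ being the unanimity game). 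On $[\tfrac12,1]$ we have $x(q)\ge1$ increasing, so $r(q)$ strictly decreases from $2$ to $1$, hence $f$ strictly increases from $0$ to $\tfrac13$. Finally $f(q)=f(1-q)$ follows from $x(1-q)=1/x(q)$ — a consequence of the identity $\tfrac{2}{x^2+1}+\tfrac{1}{x+1}=3-\bigl(\tfrac{2x^2}{x^2+1}+\tfrac{x}{x+1}\bigr)$ — which forces $r(1-q)=r(q)$; alternatively it follows from the self-duality of the Banzhaf index, using $v_{n,q}^{*}=v_{n,1-q}$ whenever $q\cdot3n\notin\mathbb Z$ (which suffices for the limit) and $w_{n,q}=w_{n,1-q}$.

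\emph{Main obstacle.} The delicate point is the ratio asymptotics $c(k+1)/c(k)\to1/x(q)$ in the second paragraph: it must hold \emph{uniformly} for $k=q\cdot3n+O(1)$ and be stable under $\pm1$ perturbations of the multiset of weights, since $k^*$ fluctuates by $O(1)$ with $n$ (through the fractional part of $q\cdot3n$) and three slightly different ground sets occur. This needs the local limit theorem with an $o(1)$ error term uniform on an interval of width $O(\sqrt n)$ about $q\cdot3n$; the uniform boundedness of the summands and the convergence $\theta_n\to x(q)$ — which keeps the inclusion probabilities bounded away from $0$ and $1$ for each fixed $q\in(0,1)$ — are exactly what make this go through. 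Everything after that estimate is routine.
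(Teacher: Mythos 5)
The paper does not actually prove Proposition~\ref{prop_example_3}: it explicitly refrains from a rigorous argument and only records the duality symmetry $f(q)=f(1-q)$, the endpoint values $f(0)=f(1)=\tfrac13$, the value $f(\tfrac12)=0$ via Lindner--Machover, exact swing-count formulas, and numerical evidence for existence of the limit and monotonicity. Your proposal is therefore a genuinely different (and more ambitious) route: the reduction of $\Vert \operatorname{BZI}(v_{n,q})-w_{n,q}\Vert_1$ to the single ratio $r^{(n)}=\eta_1/\eta_{n+1}$ via $n(b_2^{(n)}+b_1^{(n)})=1$ is correct, the swing identities $\eta_{n+1}=c_1(k^*)$, $\eta_1=c_2(k^*)+c_2(k^*-1)$ and the relations $c_0(k)=c_1(k)+c_1(k-1)=c_2(k)+c_2(k-2)$ are right, and the resulting closed form $f(q)=\tfrac{2(2-r(q))}{3(r(q)+1)}$ with $r(q)=(1+x(q))^2/(1+x(q)^2)$ and $\tfrac{2x^2}{1+x^2}+\tfrac{x}{1+x}=3q$ passes every available consistency check: $F(x)+F(1/x)=3$ gives the symmetry, $x(\tfrac12)=1$ gives $f(\tfrac12)=0$, the limits $x\to 0,\infty$ give $\tfrac13$ at the endpoints, and the monotonicity calculus is correct; a crude extrapolation of the exact values of $r^{(n)}$ at $q=\tfrac23$ for $n=5,10$ is also in good agreement with your predicted limit $\approx 1.918$. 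If completed, your argument would yield strictly more than the paper claims, namely an explicit formula for $f(q)$.

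The remaining gap is exactly the step you flag: the ratio asymptotics $c(k+1)/c(k)\to 1/x(q)$, uniformly for $k=3qn+O(1)$ and stable under removing one player from the ground set. As written you invoke Gnedenko's local limit theorem via \cite{petrov1975sums}, but that theorem is for i.i.d.\ summands, whereas your tilted variables form two distinct families (values in $\{0,1\}$ and $\{0,2\}$) with an $n$-dependent tilt, i.e.\ a triangular array of independent but not identically distributed lattice variables. You need either a non-i.i.d.\ local limit theorem (checking the span-$1$ condition, which here is supplied by the weight-$1$ players alone) or, more simply in this concrete case, a direct local estimate for $2\,\mathrm{Bin}(n,p_2)+\mathrm{Bin}(n',p_1)$ with $p_1,p_2$ bounded away from $0$ and $1$; either way the estimate must be proved, not cited, particularly since the paper itself asserts that off-the-shelf error bounds are too weak to determine $f(q)$ analytically. (Your observation that only relative $o(1)$ accuracy of probabilities in the central zone is needed for coefficient ratios does circumvent that objection, but it has to be carried out.) Two minor points: at $q=0$ the game $[0;w]$ is degenerate ($v(\emptyset)=1$), so the endpoint value there should be read, as in the paper, for quotas near $0$ or obtained from $f(1)$ by symmetry; and in the duality argument the case $3qn\in\mathbb{Z}$ should be handled by your uniformity claim (quota shifted by one unit), which you only hint at.
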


We refrain from giving a rigorous proof. Symmetry around $q=\tfrac{1}{2}$, i.e., $f(q)=f(1-q)$ follows by 
considering the dual game. For a relative quota $q$ near $0$ or near $1$ all players are equivalent so that 
$\BZI(v_{n,q})=\tfrac{1}{2n}\cdot(1,\dots,1)$, which gives $f(0)=f(1)=\tfrac{1}{3}$. From 
\cite[Theorem 3.6]{lindner2004ls} we conclude $f(\tfrac{1}{2})=0$. In order to check the existence of the limit and monotonicity 
numerically we state
\begin{eqnarray*}
  \eta_{1}(v_{n,q}) &=& \sum_{i=0}^{n-1} {{n-1}\choose{i}}\cdot \left({{n}\choose{\lceil q\cdot 3n\rceil-2i-2}} + {{n}\choose{\lceil q\cdot 3n\rceil-2i-1}}\right)\\
  &=& \sum_{i=0}^{n-1} {{n-1}\choose{i}}\cdot {{n+1}\choose{\lceil q\cdot 3n\rceil-2i-1}}\\
  \eta_{n+1}(v_{n,q}) &=& \sum_{i=0}^n {{n}\choose{i}}\cdot {{n-1}\choose{\lceil q\cdot 3n\rceil-2i-1}}
\end{eqnarray*}
noting that convergence is rather slow and requires high precision computations. We remark that error bounds in general local limit 
theorems for lattice distributions like e.g.\ \cite[Theorem 2 in Chapter VII]{petrov1975sums} are (inevitably) too weak in order to determine $f(q)$ analytically. 
While the summands of $\eta_1(v_{n,q})$ and $\eta_{n+1}(v_{n,q})$ are unimodal and quickly sloping outside a small neighborhood around 
the almost coinciding peaks, the intuitive idea to bound the sums in terms of their maximal summands is not too easy to pursue. The 
maximal summand is not attained at $i\approx q\cdot n$, as one could expect. Even approximating $\log_2 {n\choose k}$ by $n\cdot H(k/n)$, 
where $H(p)=-p\log_2(p)-(1-p)\log_2(1-p)$ is the binary entropy of $p$, gives that the maximum summand is attained for $i\approx n\cdot g(q)$, 
where 
$$
  g(q)=\frac{\tilde{g}(q)^{\tfrac{1}{3}}}{12}-\frac{-3q^2+3q+\tfrac{1}{2}}{\tilde{g}(q)^{\tfrac{1}{3}}}+q
$$   
and
$$
  \tilde{g}(q)=-216q^3+324q^2-108q+6\sqrt{972q^4-1944q^3+864q^2+108q+6}.
$$
Numerically we can check that this fancy function $g$ satisfies $q\le g(q)\le 1.07\cdot g(q)$ for all $\tfrac{1}{2}\le q\le 1$ 
and is, of course, symmetric to $q=\tfrac{1}{2}$.
 
Given the numerical results for $f(q)$ we can state that $\frac{8}{3}\cdot \left|q-\tfrac{1}{2}\right|^3$ and $\tfrac{1}{3}-H(q)/3\log_2(2)$ 
correspond to curves that look similar to $f(q)$ and have a rather small absolute error.

\bigskip
\bigskip

For $w\in\mathbb{R}_{\ge 0}^n$ with $w\neq 0$ the \textit{Laakso-Taagepera} index is given by
$$
    L(w)
    =\left(\sum\limits_{i=1}^{n}w_i\right)^2 / \sum\limits_{i=1}^{n} w_i^2.
$$
In general we have $1\le L(w)\le n$. If the weight vector $w$ is normalized, then the formula simplifies to 
$L(w)=1/\sum_{i=1}^n w_i^2$. Under the name {\lq\lq}effective number of parties{\rq\rq} the index is widely used in political science 
to measure party fragmentation, see, e.g., \cite{laakso1979effective}.  
We observe the following relations between the maximum relative weight 
$\Delta=\Delta(w)$ and the Laakso-Taagepera index $L(w)$:
\begin{lemma} \cite[Lemma 3]{weight_polytope}
  \label{lemma_relation_maximum_laakso_taagepera}
  For $w\in\mathbb{R}_{\ge 0}^n$ with $\Vert w\Vert_1=1$, we have
  $$
    \frac{1}{\Delta}\le
    \frac{1}{\Delta\left(1-\alpha(1-\alpha)\Delta\right)}\le L(w)\le
    \frac{1}{\Delta^2+\frac{(1-\Delta)^2}{n-1}}\le\frac{1}{\Delta^2}
  $$
  for $n\ge 2$, where $\alpha:=\frac{1}{\Delta}-\left\lfloor\frac{1}{\Delta}\right\rfloor\in[0,1)$. 
  If $n=1$, then  $\Delta=L(w)=1$. %% and $\alpha=0$.
\end{lemma}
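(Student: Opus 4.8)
The plan is to reduce the whole statement to two-sided bounds on $S(w):=\sum_{i=1}^n w_i^2$, using the simplification $L(w)=1/S(w)$ for normalized $w$ noted above. The case $n=1$ is trivial (then $w=(1)$ and $\Delta=L(w)=1$), and within the displayed chain the two outermost inequalities are immediate: $\tfrac1\Delta\le\tfrac1{\Delta(1-\alpha(1-\alpha)\Delta)}$ because $0\le\alpha(1-\alpha)\Delta<1$ (as $\alpha\in[0,1)$ and $\Delta\le1$), and $\tfrac1{\Delta^2+(1-\Delta)^2/(n-1)}\le\tfrac1{\Delta^2}$ because the extra summand is non-negative. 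So for $n\ge2$ it remains to prove $\Delta^2+\tfrac{(1-\Delta)^2}{n-1}\le S(w)\le\Delta\bigl(1-\alpha(1-\alpha)\Delta\bigr)$.

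For the lower bound on $S(w)$ (equivalently the upper bound on $L(w)$) I would relabel so that $w_1=\Delta=\Delta(w)$; the remaining entries are non-negative with $\sum_{i=2}^n w_i=1-\Delta$, so by the Cauchy--Schwarz (or QM--AM) inequality $\sum_{i=2}^n w_i^2\ge(1-\Delta)^2/(n-1)$, whence $S(w)\ge\Delta^2+(1-\Delta)^2/(n-1)$.

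For the upper bound on $S(w)$ (equivalently the lower bound on $L(w)$) observe that $w$ lies in the polytope $P_\Delta:=\{x\in\mathbb{R}_{\ge 0}^n:\Vert x\Vert_1=1,\ x_i\le\Delta\ \forall\,i\}$, and I would maximize the convex function $S$ over $P_\Delta$ by an explicit mass-transfer argument (morally: a convex function is maximized at a vertex of the polytope). If some $x\in P_\Delta$ has two coordinates with $0<x_i\le x_j<\Delta$, replacing $(x_i,x_j)$ by $(x_i-t,x_j+t)$ with $t=\min(x_i,\Delta-x_j)>0$ stays in $P_\Delta$, increases $S$ by $2t(x_j-x_i)+2t^2>0$, and fixes at least one more coordinate at $0$ or $\Delta$; after at most $n$ steps we reach a vector with at most one coordinate in $(0,\Delta)$, hence of the form ``$j$ entries equal to $\Delta$, one entry equal to $1-j\Delta$, the rest $0$''. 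Feasibility $0\le1-j\Delta\le\Delta$ with $j\in\mathbb{N}$ forces $j=\lfloor1/\Delta\rfloor=:k$ (the degenerate case $1/\Delta\in\mathbb{N}$ yields the same value), and such a vector exists because $n\ge\lceil1/\Delta\rceil$ (from $1=\Vert w\Vert_1\le n\Delta$). Writing $1/\Delta=k+\alpha$, the remainder equals $1-k\Delta=\alpha\Delta$ and the value is $k\Delta^2+(\alpha\Delta)^2=\Delta-\alpha\Delta^2+\alpha^2\Delta^2=\Delta\bigl(1-\alpha(1-\alpha)\Delta\bigr)$; hence $S(w)\le\Delta\bigl(1-\alpha(1-\alpha)\Delta\bigr)$, and taking reciprocals gives the remaining inequality.

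The one non-routine step is this maximization of $S$ over $P_\Delta$: one must see that the extremal configuration packs as many coordinates of size $\Delta$ as possible plus a single remainder, and then correctly identify that remainder. The explicit exchange argument makes both termination and the feasibility bookkeeping transparent, and the final algebraic simplification via $k=1/\Delta-\alpha$ is a one-liner; the $n=1$ case, the Cauchy--Schwarz step, and the two slack inequalities are all immediate.
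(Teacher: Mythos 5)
Your proof is correct and follows essentially the same route as the paper: both bounds come from extremizing $\sum_i w_i^2$ at fixed maximal weight, and you identify the same extremal configurations (equal residual weights for the upper bound on $L$, and $\lfloor 1/\Delta\rfloor$ coordinates of size $\Delta$ plus a remainder $\alpha\Delta$ for the lower bound). The only cosmetic differences are that you obtain the minimum of $\sum_i w_i^2$ via Cauchy--Schwarz instead of the paper's averaging/smoothing step, and your exchange chain for the maximum is constructive, so you can dispense with the compactness argument the paper invokes.
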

\begin{proof}
  The key idea is to optimize $\sum\limits_{i=1}^n w_i^2$ with respect to the constraints $w\in\mathbb{R}^n$, $\Vert w\Vert_1=1$, 
  and $\Delta(w)=\Delta$.

  For $n=1$, we have $w_1=1$, $\Delta(w)=1$, $\alpha=0$, and $L(w)=1$, so that we assume $n\ge 2$ in the remaining part of the proof.
  For $w_i\ge w_j$ consider $a:=\frac{w_i+w_j}{2}$ and $x:=w_i-a$, so that $w_i=a+x$ and $w_j=a-x$. With this we have 
  $w_i^2+w_j^2=2a^2+2x^2$ and $(w_i+y)^2+(w_j-y)^2=2a^2+2(x+y)^2$. Let us assume that $w^\star$ minimizes $\sum_{i=1}^n w_i^2$ under 
  the conditions $w\in\mathbb{R}_{\ge 0}$, $\Vert w\Vert_1=1$, and 
  $\Delta(w)=\Delta$. (Since the target function is continuous and the feasible set is compact and non-empty, a global minimum indeed 
  exists.) W.l.o.g.\ we assume $w_1^\star=\Delta$. If there are indices $2\le i,j\le n$ with $w_i^\star>w_j^\star$, i.e., $x>0$ in the 
  above parameterization, then we may choose $y=-x$. Setting $w_i':=w_i^\star+y=a=\frac{w_i^\star+w_j^\star}{2}$, $w_j':=w_j^\star-y=a
  =\frac{w_i^\star+w_j^\star}{2}$, and $w_h':=w_h^\star$ for all $1\le h\le n$ with $h\notin\{i,j\}$, we have $w'\in \mathbb{R}_{\ge 0}^n$, 
  $\Vert w'\Vert_1=1$, $\Delta(w')=\Delta$, and $\sum_{h=1}^n \left(w_h'\right)^2=\sum_{h=1}^n \left(w_h^\star\right)^2\,-\,x^2$. 
  Since this contradicts the minimality of $w^\star$, we have $w_i^\star=w_j^\star$ for all $2\le i,j\le n$, so that we conclude 
  $w_i^\star=\frac{1-\Delta}{n-1}$ for all $2\le i\le n$ from $1=\Vert w^\star\Vert_1=\sum\limits_{h=1}^n w_h^\star$.
  Thus, $L(w)\le 1/\left(\Delta^2+\frac{(1-\Delta)^2}{n-1}\right)$, which is tight. Since $\Delta\le 1$ and $n\ge 2$, we have 
  $1/\left(\Delta^2+\frac{(1-\Delta)^2}{n-1}\right)\le \frac{1}{\Delta^2}$, which is tight if and only if $\Delta=1$, i.e., 
  $n-1$ of the weights have to be equal to zero.
  
  Now, let us assume that $w$ maximizes $\sum_{i=1}^n w_i^2$ under the conditions $w\in\mathbb{R}_{\ge 0}$, $\Vert w\Vert_1=1$, and 
  $\Delta(w)=\Delta$. (Due to the same reason a global maximum indeed exists.) Due to $1=\Vert w\Vert_1\le n\Delta$ we have $0<\Delta\le 1/n$, 
  where $\Delta=1/n$ implies $w_i=\Delta$ for all $1\le i\le n$. In that case we have $L(w)=n$ and $\alpha=0$, so that the stated lower bounds 
  for $L(w)$ are valid. In the remaining cases we assume $\Delta>1/n$. If there would exist two indices $1\le i,j\le n$ with $w_i\ge w_j$, 
  $w_i<\Delta$, and $w_j>0$, we may strictly increase the target function by moving weight from $w_j$ to $w_i$ (this corresponds to 
  choosing $y>0$), by an amount small enough to still satisfy the constraints $w_i\le \Delta$ and $w_j\ge 0$. Since $\Delta>0$, we can set 
  $a:=\lfloor 1/\Delta\rfloor\ge 0$ with $a\le n-1$ due to $\Delta>1/n$. Thus, for a maximum solution, we
  have exactly $a$ weights that are equal to $\Delta$, one weight that is equal to $1-a\Delta\ge 0$ (which may indeed 
  be equal to zero), and $n-a-1$ weights that are equal to zero. With this and $a\Delta=1-\alpha\Delta$ we have 
  $
    \sum_{i=1}^n w_i^2=a\Delta^2 (1-a\Delta)^2=\Delta-\alpha\Delta^2+\alpha^2\Delta^2=\Delta(1-\alpha\Delta+\alpha^2\Delta)
    =\Delta\left(1-\alpha(1-\alpha)\Delta\right)\le\Delta
  $.   
  Here, the latter inequality is tight if and only if $\alpha=0$, i.e., $1/\Delta\in\mathbb{N}$.
\end{proof}

Let us consider another parametric example.
\begin{lemma}
  Let $v_{n_1,n_2}$ be a weighted game with an even number $n_1\ge 2$ of players of weight $w_1$, an odd number $n_2$ players of weight $1$, and a quota of $q=\tfrac{n_1}{2}w_1+\tfrac{n_2+1}{2}$. If $w_1>n_2$, then the number of swing for a player of weight $w_1$ is given by
  \begin{equation}
    \underset{\frac{1}{2}\cdot {n_1 \choose {n_1/2}}}{\underbrace{{{n_1-1} \choose {n_1/2}}}} \cdot\underset{2^{n_2-1}}{\underbrace{\sum_{i=0}^{\tfrac{n_2-1}{2}} {n_2\choose i}}} \,+\,  \underset{\frac{1}{2}\cdot {n_1 \choose {n_1/2}}}{\underbrace{{{n_1-1} \choose {n_1/2-1}}}} \cdot\underset{2^{n_2-1}}{\underbrace{\sum_{i=\tfrac{n_2+1}{2}}^{n_2} {n_2\choose i}}}= {n_1 \choose {n_1/2}}\cdot 2^{n_2-1}
  \end{equation}
 and the number of swings of a player of weight $1$ is given by
  ${n_1 \choose {n_1/2}}\cdot {{n_2-1} \choose {(n_2-1)/2}}$.
\end{lemma}

\begin{lemma}
  Let $x_{n_1,n_2}:=\left(\tfrac{1}{n_1},\dots,\tfrac{1}{n_1},0,\dots,0\right)$ with $n_1$ positive and $n_2$ zero entries. Then, we have
  \begin{equation}
     \Vert \BZI(v_{n_1,n_2})-x_{n_1,x_2} \Vert_1=2\cdot\frac{n_2\cdot {{n_2-1} \choose {(n_2-1)/2}}/2^{n_2-1}}{n_1+n_2\cdot {{n_2-1} \choose {(n_2-1)/2}}/2^{n_2-1} }
     \approx 2\cdot\frac{\sqrt{\tfrac{2}{\pi}\cdot n_2}}{n_1+\sqrt{\tfrac{2}{\pi}\cdot n_2}}.
  \end{equation}
\end{lemma}
Note that for $w_1\to\infty$ the relative weights of $v_{n_1,n_2}$ tend to $x_{n_1,n_2}$. If $n_1=n^\alpha$ for $0<\alpha<\tfrac{1}{2}$ and $n_2=n-n^\alpha$, then the right hand side tends to $2$ if $n$ tends to infinity, while the maximum relative weight tends to zero and the relative quota tends to $\tfrac{1}{2}$.

\end{document}